\newif\ifjournalformat
\journalformattrue

\newif\ifshowrevisions
\showrevisionsfalse

\ifjournalformat
    \documentclass[10pt,journal,letterpaper]{IEEEtran}
    \newenvironment{widefigure}{\begin{figure*}[t]}{\end{figure*}}
\else
    \documentclass{article}
    \usepackage{mathpazo}  % Palatino font with math support
    \newenvironment{widefigure}{\begin{figure}[t]}{\end{figure}}
    \newenvironment{IEEEkeywords}{\noindent\textbf{Keywords:} }{}
    
\fi

\usepackage{cite}
\usepackage{amsmath,graphicx}
\usepackage{amssymb}
\usepackage{amsfonts}
\usepackage[english]{babel}
\usepackage{xcolor,latexsym,amsthm}
\usepackage[hidelinks]{hyperref}
\usepackage[mathscr]{euscript}
\usepackage{tikz}
\usetikzlibrary{decorations.pathreplacing,decorations.pathmorphing,math}
\usepackage{placeins}

\ifshowrevisions
    \colorlet{revisionred}{red}
    \colorlet{revisionblue}{blue}
\else
    \colorlet{revisionred}{black}
    \colorlet{revisionblue}{black}
\fi

\newcommand{\E}{\mathbf{E}}

\newcommand{\Var}{\mathbf{Var}}

\newcommand{\Prob}{\mathbf{P}}

\newcommand{\Ind}{\mathbf{1}}
\newcommand{\R}{\mathbb{R}}

\newcommand{\Measure}{{\Lambda}}
\def\fade{{F}} %%% symbol for general fading random variables

\newcommand{\righthalf}{\mathcal{R}}
\newcommand{\lefthalf}{\mathcal{L}}

\newtheorem{Theorem}{Theorem}[section]
\newtheorem{Corollary}[Theorem]{Corollary}
\newtheorem{Lemma}[Theorem]{Lemma}
\newtheorem{Proposition}[Theorem]{Proposition}
\theoremstyle{definition}

\newtheorem{Remark}[Theorem]{Remark}

\begin{document}

\title{Wireless Signal Point Processes with a Reflecting Wall}
\hypersetup{
    pdftitle={Wireless Signal Point Processes with a Reflecting Wall},
    pdfauthor={H. Paul Keeler and George C. Alexandropoulos}
}

%authors
\author{{H. Paul Keeler and George~C.~Alexandropoulos}
\thanks{H.~P.~Keeler is with the School of Mathematics and Statistics, University of Melbourne, Parkville, VIC 3010, Australia (e-mail: hpkeeler@unimelb.edu.au).}

\thanks{G.~C.~Alexandropoulos is with the Department of Informatics and Telecommunications, National and Kapodistrian University of Athens, Panepistimiopolis Ilissia, 16122 Athens, Greece. (e-mail: alexandg@di.uoa.gr)}
}

\maketitle
%%%%%%%%%%%%%%%%%%%%%%%%%%%%%%%%%%%%%%%%%%%%%%%%%%%%%%%%%%%%%%%%%%%%%%%%%%%%%%%%%%%%%%%%%%%%%%

%%%%%%%%%%%%%%%%%%%%%%%%%%%%%%%%%%%%%%%%%%%%%%%%%%%%%%%%%%%%%%%%%%%%%%%%%%%%%%%%%%%%%%%%%%%%%%
%%%  Abstract
%%%%%%%%%%%%%%%%%%%%%%%%%%%%%%%%%%%%%%%%%%%%%%%%%%%%%%%%%%%%%%%%%%%%%%%%%%%%%%%%%%%%%%%%%%%%%%
\begin{abstract}
A long reflecting wall gives every wireless transmitter a mirror image, pairing its direct and reflected paths. Motivated by intelligent surfaces, we study the resulting signal point processes under two treatments: keeping the powers apart and superposing their point processes, or adding the powers transmitter by transmitter to form a recombined process. A threshold on inverse received power selects a region of transmitter locations, so Campbell's theorem reduces the mean measures to area calculations without assuming Poisson transmitters. For power-law path loss and paired fading marks, we derive the ratio of the recombined and superposed mean measures at large thresholds. The limit is independent of transmitter density and wall distance, but generally depends on the joint fading distribution. Under common fading, that dependence disappears and the limit depends only on the reflection coefficient and path-loss exponent. For Poisson transmitters, we derive paired count fluctuations and convergence of the count ratio. We also calculate the probability that the strongest individual path is direct. When the wall passes through the receiver, this probability has a closed form and the strongest received power has an exact Fr\'echet distribution. These results provide a reference model for wireless networks with reflecting surfaces.
\end{abstract}
%%%%%%%%%%%%%%%%%%%%%%%%%%%%%%%%%%%%%%%%%%%%%%%%%%%%%%%%%%%%%%%%%%%%%%%%%%%%%%%%%%%%%%%%%%%%%%

%%%%%%%%%%%%%%%%%%%%%%%%%%%%%%%%%%%%%%%%%%%%%%%%%%%%%%%%%%%%%%%%%%%%%%%%%%%%%%%%%%%%%%%%%%%%%%
%%%  Keywords
%%%%%%%%%%%%%%%%%%%%%%%%%%%%%%%%%%%%%%%%%%%%%%%%%%%%%%%%%%%%%%%%%%%%%%%%%%%%%%%%%%%%%%%%%%%%%%
\begin{IEEEkeywords}
Poisson point process, method of images, received signal power, stochastic geometry, reconfigurable intelligent surfaces.
\end{IEEEkeywords}
%%%%%%%%%%%%%%%%%%%%%%%%%%%%%%%%%%%%%%%%%%%%%%%%%%%%%%%%%%%%%%%%%%%%%%%%%%%%%%%%%%%%%%%%%%%%%%

%%%%%%%%%%%%%%%%%%%%%%%%%%%%%%%%%%%%%%%%%%%%%%%%%%%%%%%%%%%%%%%%%%%%%%%%%%%%%%%%%%%%%%%%%%%%%%
%%%  Introduction
%%%%%%%%%%%%%%%%%%%%%%%%%%%%%%%%%%%%%%%%%%%%%%%%%%%%%%%%%%%%%%%%%%%%%%%%%%%%%%%%%%%%%%%%%%%%%%
\section{Introduction}\label{sec:intro}
Someone outside makes a phone call while standing beside a long wall of a building. A signal from a nearby network base station can reach the phone. The phone's receiver may get two copies of the same transmission: one directly through the air and one after reflection from the wall. The law of reflection makes the angle of incidence equal to the angle of reflection. The method of images then turns the reflected path into a simple mirror picture. Reflect the base station across the wall, and the bent path becomes a straight path from a virtual transmitter. Now replace that one base station by a city full of randomly placed transmitters. Every transmitter acquires a paired image. This paper asks what point process of received signals the wall creates.

The related wave-level model of Keeler~\cite{keeler2026reflected} follows coherent reflections from one randomly placed transmitter between two walls. Here we fix one wall, add received powers rather than coherent fields, and study a random field of transmitters. Our wall is an infinite flat specular reflector with a constant power reflection coefficient, not a programmable surface with phase control or beam steering. Although idealised, it lets us isolate the dependence created when each real transmitter has one mirror image.

We assume there are two distinct ways to treat each pair of signals. Keeping the paths separate superposes the direct and reflected signal point processes; adding their powers before mapping produces one recombined point per transmitter. A threshold on inverse received power selects a region of transmitter locations. Campbell's theorem turns this geometric calculation into a mean measure. The wall makes the geometry different from that of a disk in the full plane. Recombination leads to a Cassinian-type threshold region with a curved boundary. Section~\ref{sec:One_Wall} derives the direct and reflected mean measures assuming only constant transmitter density. Section~\ref{s.both} then compares the two treatments of each path pair and derives bounds and asymptotics. Section~\ref{s.poisson} assumes Poisson transmitters to describe paired count fluctuations, the observed count ratio, and the strongest individual path.

Our main asymptotic result concerns the ratio of the recombined mean measure to the superposed mean measure obtained when the paths are kept separate. Throughout, $\Lambda$ with a suitable subscript denotes a mean measure. Under power-law path loss and with general paired fading marks, the ratio's limit can depend on their joint distribution. Under common fading, this dependence disappears, and the ratio becomes
\begin{equation}\label{e.ratio_intro}
\frac{\Measure_{+}(t)}{\Measure_{\cup}(t)}
\longrightarrow
\frac{(1+\kappa)^{2/\beta}}{1+\kappa^{2/\beta}},
\qquad t\to\infty.
\end{equation}
The reflection coefficient $\kappa$ and path-loss exponent $\beta$ appear in this ratio. The threshold $t$ applies to inverse received power, so increasing it admits progressively weaker signals. The ratio compares two counting rules, not two powers: a limit below one does not mean that recombination reduces received power. At finite thresholds, a reflected path can be too weak to qualify on its own yet still push its paired recombined signal across the threshold. The transmitter density $\lambda$ cancels from the ratio at every threshold, and the wall distance $a$ drops out in the limit. Under common fading, the fading distribution also disappears. The Poisson results complement this geometric limit by describing random counts, identifying whether a direct or reflected path supplies the strongest signal, and giving the distribution of the strongest power when the wall passes through the receiver.

\subsection{Related work}\label{sec:lit}
Stochastic geometry has long been used to model wireless networks with randomly placed transmitters and receivers~\cite{baccelli1997stochastic,baccelli2001coverage,andrews2011tractable,BB09,blaszczyszyn2020coverage}. Models of intelligent surfaces likewise randomise base stations, users, blockages, or reflectors, often to study coverage and rate. Kishk and Alouini~\cite{kishk2020exploiting} place surfaces on random blockages; Zhang et al.~\cite{zhang2021reconfigurable} use a Poisson cluster model for multicell networks; and Shafique et al.~\cite{shafique2022stochastic} calculate aggregate interference with many surfaces. Pradhan, Imani, and Dhillon~\cite{pradhan2025beamshaping} instead use a physically consistent model that includes mutual coupling and structural scattering.

Several models explicitly retain dependence in the reflecting infrastructure. Shi et al.~\cite{shi2023coverage} associate a fixed number of surfaces with each base station. Lin and Yu~\cite{lin2025ternary} use the triangle formed by a base station, surface, and user as their basic geometric unit. Choi et al.~\cite{choi2025stochastic} place roadside surfaces and vehicle users on the same random road system, while Sun et al.~\cite{sun2026stochastic} condition reflecting surfaces on their associated Poisson base stations and study the resulting aggregate reflected signal by shot noise and Laplace transforms.

The reflection geometry itself is considered by Di Renzo and Song~\cite{direnzo2019reflection}, who model random finite metasurface-coated objects, and by Narayanan et al.~\cite{narayanan2017coverage}, who mirror Poisson base stations in random finite reflectors but retain only the shortest reflected path. Our setting is complementary: a single fixed wall gives every transmitter an exact, deterministic mirror image. We keep the associated direct and reflected powers paired, whether the paths are superposed or their powers recombined.

Another line of work maps transmitters to their inverse received powers, forming a propagation or projection process; B{\l}aszczyszyn et al.~\cite[Chapter~4]{book2018stochastic} review this construction. Under suitable propagation effects, sufficiently rare strong signals can appear approximately Poisson even when the transmitter locations are not Poisson~\cite[Sec.~IV-A]{keeler2016stronger}, \cite[Theorem~2.2 and Remark~2.4]{keeler2018wireless}. This is distinct from our exact Poisson mapping results when the transmitters themselves are Poisson. Extremal shot noise, which takes a maximum rather than a sum, is also established theory~\cite[Vol.~I, Sec.~2.4]{BB09}. Nguyen and Baccelli~\cite{nguyen2010best} derived the distribution of the maximum received signal strength in a Poisson wireless network, together with its joint distribution with the interference, and used these quantities to study the best signal quality. Our strongest-path calculation asks a more particular question: which of the paired direct and reflected paths attains that maximum?

For reference, Table~\ref{tab:notation} collects the main symbols used for the two signal models.
\begin{table}[t]
\centering
\caption{Main notation for the one-wall model.}
\label{tab:notation}
\footnotesize
\renewcommand{\arraystretch}{1.12}
\begin{tabular}{@{}p{0.38\columnwidth}p{0.55\columnwidth}@{}}
\hline
\textbf{Symbols} & \textbf{Meaning} \\
\hline
$o,a,\kappa,\beta$ & Receiver, wall distance, reflection coefficient, path-loss exponent \\
$r,\hat r_1$ & Direct distance and image distance, equal to the reflected-ray length \\
$\Phi,\Phi'$ & Real transmitters and their virtual images \\
$\fade_i,\fade_i'$ & Direct and reflected fading marks \\
$L_i,L_i',L_i^{+}$ & Direct, reflected and recombined inverse powers \\
$\Xi,\Xi'$ & Direct and reflected signal point processes \\
$\Xi_{\cup},\Xi_{+}$ & Superposed (paths kept separate) and recombined signal processes \\
$\Measure_{\lefthalf},\Measure_{\righthalf}$ & Direct and reflected mean measures \\
$\Measure_{\cup},\Measure_{+}$ & Corresponding superposed and recombined mean measures \\
\hline
\end{tabular}
\end{table}

%%%%%%%%%%%%%%%%%%%%%%%%%%%%%%%%%%%%%%%%%%%%%%%%%%%%%%%%%%%%%%%%%%%%%%%%%%%%%%%%%%%%%%%%%%%%%%

%\section{Geometrical Signal Model}\label{sec:Geometrical_Model}
\section{Geometric model and mean measures}\label{sec:One_Wall}
Place the receiver at the origin $o$ beside an infinitely long vertical wall at $x=a$, as shown in Figure~\ref{fig:onewall}. A transmitter at the point $z=(x,y)\in\R^2$ lies on the same side of the wall as the receiver, so its horizontal coordinate satisfies $x\leq a$. The wall distance $a\geq0$ is the perpendicular distance from the receiver to the wall. We also write the transmitter position in polar coordinates $(r,\theta)$.

We use a constant power reflection coefficient $\kappa\in[0,1]$: the wall reflects the fraction $\kappa$ of the incident power and absorbs the rest. A lossless wall has reflection coefficient $\kappa=1$. The constant coefficient is an idealization. Real reflection depends on wall material, incidence angle, frequency, and polarization. We use this simple specular-reflection model to keep the geometric pairing visible. The wave-level model in the paper~\cite{keeler2026reflected} discusses these physical limitations in more detail.

%%%%%%%%%%%%%%%%%%%%%%%%%%
%  Figure  -- One walls
%%%%%%%%%%%%%%%%%%%%%%%%%%
\begin{figure}[t]
\centering
% Figure source: fig_transmitter_image.tex
% Figure 1: original two-panel drawing combined into one panel.
% Preserve line weights, wall hatching and point symbol sizes;
% the reflected contact is corrected to (5,2.25).
% Local TikZ style for the hatched reflecting wall.
\tikzset{
  interface_right/.style={
    postaction={draw,decorate,decoration={border,angle=-45,
      amplitude=0.1cm,segment length=2mm}}}
}
\begin{tikzpicture}[scale=0.8]
\filldraw
    (2,0) circle (2.5pt) node[align=center,below] {$o$}
    (5,0) circle (1.25pt) node[align=right,below] {\hspace{-1em}$\,a$}
    (4,3) circle (2.5pt) node[above] {$z$};
% Original hatched wall, exactly as in Figure 2.
\draw[-] (5,-.8) -- (5,4.4);
\draw[line width=.5pt,interface_right] (5,-.8) -- (5,4.4);
% Direct path.
\draw (2,0) -- (4,3);
% Physical specular path: incidence, then reflection to the receiver.
\draw[dashed][color=blue] (4,3) -- (5,2.25) -- (2,0);
% Virtual transmitter and the straight continuation beyond the wall.
\draw[dashed][color=blue] (5,2.25) -- (6,3);
\draw[color=blue] (6,3) circle (2.5pt) node[above] {$z'$};
\end{tikzpicture}
\caption{A transmitter $z$ and its virtual image $z^\prime$ across a reflecting wall. The solid line is the direct ray to the receiver $o$. The blue dashed segments show the incident ray, the reflected ray, and their straight extension beyond the wall to the virtual transmitter. The reflected path has the same length as the straight path from image $z^\prime$ to receiver $o$.}\label{fig:onewall}
\end{figure}

We assume that the path-loss function $\ell(r)$ is positive and non-increasing. A standard single-slope model is $\ell(r)=r^{-\beta}$ with path-loss exponent $\beta>0$.

For a single transmitter at location $z$, the direct and reflected path powers, before fading marks are introduced, are $\ell(r)$ and $\kappa\ell(\hat r_1)$, respectively. When these powers are recombined noncoherently or averaged over their relative phase, the received power is
\begin{align}
\hat{S}(z)=&\ell(r)+\kappa\,\ell\left(\hat{r}_1\right)
\,,
\label{e.signal1}
\end{align}
where the image distance $\hat r_{1}$ is equal to the reflected-ray length. Expression~\eqref{e.signal1} is not the instantaneous coherent power of two phase-locked paths, which would contain a phase-dependent cross term. The model is appropriate, for example, when the paths are resolved and their powers are combined noncoherently, or when phase averaging removes the cross term. 

The line-of-sight signal travels from transmitter $z$ to receiver $o$. In polar coordinates, the horizontal and vertical coordinates of transmitter $z$ are $x=r\cos\theta$ and $y=r\sin\theta$, respectively. The distance from the transmitter to the receiver via the wall is then
\begin{equation}\label{e.rhat}
    \hat{r}_{1} =\sqrt{(2a-x)^2+y^2}\,.
\end{equation}
We record a simple geometric fact that we will use repeatedly. For every transmitter location $z\in\lefthalf$,
\begin{equation}\label{e.rhatgeq}
\hat{r}_1\geq r\,,
\end{equation}
because the squared-distance difference $\hat{r}_1^2-r^2=(2a-x)^2-x^2=4a(a-x)\geq 0$ when the coordinate $x\leq a$ and the wall distance $a\geq 0$. Equivalently, the image distance equals the length of the reflected ray path, which the triangle inequality bounds below by the direct distance. In words, the reflected path is never shorter than the direct path.

\subsection{Random transmitter field}\label{s.transmitters}
We now extend our model to incorporate spatial randomness by considering multiple transmitters distributed according to a spatial point process. Specifically, on the half-plane $\lefthalf=\{(x,y): x\leq a ; x,y\in \R\}$ to the left of the wall, we model transmitter locations as a point process $\Phi=\{Z_i\}_{i=1}^{\infty}$ with constant density $\lambda>0$. Thus, for any bounded planar region $B\subseteq\lefthalf$, the expected number of transmitters in region $B$ is $\lambda |B|$, where the quantity $|B|$ denotes its area. At this stage we do not assume the Poisson property. The results in this section hold for any point process with constant density, by Campbell's theorem~\cite{kingman1992poisson,book2018stochastic}. We introduce the Poisson assumption later, in Section~\ref{s.poisson}, where distributional statements first become necessary.

The key insight from the method of images is that reflections from the wall create virtual transmitter images $\Phi'=\{Z_i'\}_{i=1}^{\infty}$ on the opposite side, as shown in Figure~\ref{fig:onewallpoints}. Each virtual image $Z_i'$ is deterministically related to its corresponding real transmitter $Z_i$ through mirror reflection across the wall. Mirror reflection preserves areas, so the collection $\Phi'$ is itself a point process with constant density $\lambda$ on the right half-plane $\righthalf$. Critically, these two processes are strongly coupled: the transmitter process $\Phi$ completely determines the image process $\Phi'$ through the reflection geometry.

%%%%%%%%%%%%%%%%%%%%%%%%%%
%  Transmitter and image point fields --
%%%%%%%%%%%%%%%%%%%%%%%%%%
\begin{figure}[t]
\centering
% Figure source: fig_transmitter_image_points.tex
% Smaller geometric footprint; labels keep their LaTeX size.
% Local TikZ style for the hatched reflecting wall.
\tikzset{
  interface_right/.style={
    postaction={draw,decorate,decoration={border,angle=-45,
      amplitude=0.1cm,segment length=2mm}}}
}
\begin{tikzpicture}[scale=0.70,font=\footnotesize]
\filldraw 
%(0,0) circle (1pt) node[align=right, below] {\hspace{1em}  -$b$} --
(2,0) circle (2.4pt) node[align=center, below] {$o$}       %observer
(5,0) circle (.60pt) node[align=right,  below] {} 
(4,3) circle (2.4pt) node[above] {$Z_1$} ;  %transmitter 1
%   \draw[<->] (0,-1) --(0,5); %left wall
\draw[-] (5,-1.5) --(5,5); %right wall
\draw[line width=.5pt,interface_right] (5,-1.45) --(5,4.95); %right wall
   % \draw[dashed](2,0) -- (4,3)  ;  %transmitter to observer
\draw[color=blue] (6,3) circle (2.4pt) node[above] {$Z_1'$} ;  %virtual right node  
%\draw[color=red] (-4,3) circle (2.4pt);  %virtual left node  
    \draw[dashed][color=blue] (6,3) -- (2,0); %first right arrow A    
  %   \draw[dashed][color=red] (4,3) --(-4,3) %first right arrow A
  %        --(2,0); %first right arrow B
%%%transmitter 2
\filldraw  (1,4) circle (2.4pt) node[above] {$Z_2$} ;  %transmitter 2
\draw[color=blue] (9,4) circle (2.4pt) node[above] {$Z_2'$};  %virtual node (transmitter 2)
%\draw[dashed][color=red] (1,4) --(9,4); %first right arrow (transmitter 2)
\draw[dashed][color=blue] (2,0) --(9,4); %second right arrow (transmitter 2)
%%% transmitter 3
\filldraw (0,-1) circle (2.4pt) node[above] {$Z_3$} ;  %transmitter 3
\draw[color=blue] (10,-1) circle (2.4pt)node[above]{$Z_3'$} ;  %virtual node (transmitter 3)
 % \draw[dashed][color=green] (0,-1) --(10,-1); %first right arrow (transmitter 3)
    \draw[dashed][color=blue] (2,0) --(10,-1); %second right arrow (transmitter 3)
\end{tikzpicture}
\caption{The real transmitters $\Phi=\{Z_i\}_i$ lie in the left half-plane $\lefthalf$. Their paired images $\Phi'=\{Z_i'\}_i$ lie in the right half-plane $\righthalf$. Reflection fixes each image once its transmitter is placed.}
\label{fig:onewallpoints}
\end{figure}

%%%%%%%%%%%%%%%%%%%%%%%%%%

To model propagation effects beyond geometric path loss, we attach a positive random mark $\fade_i$ to each transmitter $Z_i\in\Phi$. We assume the marks $\fade_i$ are independent copies of a generic positive random variable $\fade$, and that they are independent of the transmitter process $\Phi$. More general propagation models can allow dependence between marks at different transmitters~\cite[Sec.~4.1.1]{book2018stochastic}. Under our assumption, the marked transmitters form the independently marked point process $\tilde{\Phi}=\{(Z_i,\fade_i)\}_{i=1}^{\infty}$~\cite[Vol.~I, Chapter~2]{BB09}. The direct received power from transmitter $Z_i$ at the origin is $\fade_i\ell(\lVert Z_i\rVert)$. In standard wireless network models, the marks typically represent small-scale fading or shadowing.

\subsection{Direct signals}\label{s.direct}
A standard approach with wireless network models based on point processes is to study interference and coverage using transformed point processes. We do this by inverting the received signal powers. For each transmitter $Z_i \in \Phi$, we define the direct inverse received power $L_i$ by
\begin{equation}
    L_i \triangleq \frac{1}{ \ell(\lVert Z_i\rVert) \fade_i}\,, \quad Z_i \in \Phi \,
\end{equation}
where the distance $\lVert Z_i\rVert$ runs from the receiver $o$ to transmitter $Z_i$. This transformation produces a point process $\Xi  \triangleq\{L_i\}_{i=1}^{\infty}$ on the positive real line $\R^+$, variously termed the \emph{propagation process}, \emph{projection process}, or \emph{path loss with fading process} in the wireless literature. Under this representation, stronger signals correspond to points closer to zero, while weaker signals map to larger values. Although this inversion might seem counterintuitive, it proves mathematically convenient for analyzing interference and deriving coverage probabilities. For further background on this technique, see~\cite[Chapter 4]{book2018stochastic}.

For a threshold $t\geq0$, let the count $\Xi(t)$ be the number of points of the signal point process $\Xi$ in the interval $(0,t]$. We use the same convention for the reflected, superposed, and recombined signal point processes. We first derive the direct mean measure, the expected value of $\Xi(t)$. These are expected count results, which do not require Poisson transmitters. Later, in Section~\ref{s.poisson}, the Poisson mapping theorem will let us use the mean measures to determine the distributions of the mapped processes. For the direct signals, the mean measure is
\begin{align}
\Measure_{\lefthalf}(t)
&\triangleq \E\big[\Xi(t)\big] \nonumber \\
&= \E\left[\sum_{L\in \Xi} \Ind(L\leq t)\right] \nonumber \\
&= \E\left[\sum_{Z\in \Phi}
    \Ind\!\left(\frac{1}{\ell(\lVert Z\rVert)\fade}\leq t\right)\right]\,.
\label{intensity1}
\end{align}
where the indicator function is denoted by $\Ind$.

In special cases, we can derive closed-form expressions for the mean measure $\Lambda$. For the path-loss model $\ell$, we assume the form $\ell(r)=1/h(r)$, where the function $h$ is left-continuous and non-decreasing with generalized inverse $h^{-1}(y)=\inf\{x:h(x)>y\} $. We assume the finite expected area condition $\E[(h^{-1}(t\fade))^2]<\infty$ for every finite threshold $t>0$. For the power-law model, this condition reduces to $\E[\fade^{2/\beta}]<\infty$. If the transmitters $\Phi=\{Z_i\}$ were instead scattered on the entire plane $ \R^2$, then the mean measure is simply
\begin{equation}\label{e.full_plane_general}
\Measure_{\R^2}(t)=\pi\lambda \E[(h^{-1}(t\fade))^2]\,.
\end{equation}
This expression also appears in~\cite[Sec.~III-A, Eq.~(3)]{keeler2016stronger} and~\cite[Proposition~2.9, Eq.~(2.5)]{keeler2018wireless}. It shows that the measure $\Measure_{\R^2}(t)$ is simply the average number of points of $\Phi$ located in a disk of (random) radius $h^{-1}(t\fade)$, meaning we  need to calculate the expected area of the disk with radius  $h^{-1}(t\fade)$. The standard power-law path loss $\ell(r)=r^{-\beta}$ gives
\begin{equation}\label{e.full_plane_power}
\Measure_{\R^2}(t)=\pi\lambda \E[\fade^{2/\beta}] t^{2/\beta}\,.
\end{equation}

\subsubsection{Transmitters in the half-plane}
We use this disk interpretation to derive the mean measure $\Lambda$ in the setting with a reflecting wall. To do that, we first consider the disk $B_o(d)$ of radius $d$ centred at the receiver $o$, clipped (or truncated) by the wall at $x=a$. In other words, this region is the intersection $B_o(d)\cap \lefthalf$. In the area notation below, the calligraphic $\mathcal O$ records that the disk is centred at the origin. Its area is
\begin{align}
    A_{\mathcal O\cap \lefthalf}(d) &\triangleq|B_o(d)\cap \lefthalf| \nonumber\\
    &=\pi d ^2- A_{\mathcal O\cap {\righthalf}}(d)\, , \label{e.areadiskleft}
\end{align}
where the term $A_{\mathcal O\cap {\righthalf}}(d)$ is the area of the part of the disk lying on the other side of the wall, the right half-plane $\righthalf\triangleq\{(x,y): x>a ; x,y\in \R\}$. The area $A_{\mathcal O\cap {\righthalf}}(d)$ is calculated using basic trigonometry or calculus, giving
\begin{equation}\label{e.areadisktruc}
        A_{\mathcal O\cap {\righthalf}}(d)=
    \begin{cases}
      d^2\arccos(a/d)-a\sqrt{d^2-a^2}, & d>a, \\
      0, & d\leq a.
    \end{cases}
\end{equation}
Using this closed-form expression, we can state the mean measure of the point process formed from the inverse received powers. We stress that the next result requires no Poisson assumption. It holds for any transmitter point process with constant density $\lambda$ and positive marks that are independent, share a common distribution, and are independent of the transmitter process, by Campbell's theorem.
\begin{Proposition}\label{p.los}
Let the point process $\Phi$ on $\lefthalf$ have constant density $\lambda>0$ and positive marks $\{\fade_i\}$ that are independent of each other and of $\Phi$, and share a common distribution. The inverse received powers $\Xi=\{L_i\}_{i=1}^{\infty}$  form a point process on the real line $\R^+$ with mean measure
\begin{equation}\label{e.mean_direct_general}
\Measure_{\lefthalf}(t)=\lambda \E[A_{\mathcal O\cap \lefthalf}(h^{-1} (t\fade))]\, ,
\end{equation}
where the area $A_{\mathcal O\cap \lefthalf}$ is given by expression~\eqref{e.areadiskleft}.
If the fading mark $\fade$ is a constant $c>0$, then the mean measure simplifies to
\begin{equation}
\Measure_{\lefthalf}(t)=\lambda A_{\mathcal O\cap \lefthalf}(h^{-1} (t c))\, .
\end{equation}
\end{Proposition}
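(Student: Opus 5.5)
The plan is to start from the last line of \eqref{intensity1} and rewrite the indicator as a statement about the location of the transmitter. Since $\ell=1/h$, the event $1/(\ell(\lVert Z\rVert)\fade)\le t$ is the event $h(\lVert Z\rVert)\le t\fade$. Because $h$ is non-decreasing and left-continuous, with generalized inverse $h^{-1}(y)=\inf\{x:h(x)>y\}$, I will show that $h(r)\le y$ holds if and only if $r\le h^{-1}(y)$. One direction uses monotonicity: if $r>h^{-1}(y)$, then some $x<r$ has $h(x)>y$, so $h(r)>y$. The other direction uses left-continuity: if $r\le h^{-1}(y)$, then every $x<h^{-1}(y)$ has $h(x)\le y$, and letting $x\uparrow h^{-1}(y)$ gives $h(h^{-1}(y))\le y$. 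Hence the indicator equals $\Ind(Z\in B_o(h^{-1}(t\fade)))$, where the closed disk and the open disk differ by a null set.

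Next I apply Campbell's theorem to the independently marked process $\tilde\Phi$. Its intensity measure on $\lefthalf\times\R^+$ is $\lambda\,dz\,\Prob_\fade(df)$, because the marks are independent of $\Phi$ and identically distributed. This gives
\begin{equation*}
\Measure_{\lefthalf}(t)=\lambda\int_{\R^+}\int_{\lefthalf}\Ind\big(\lVert z\rVert\le h^{-1}(tf)\big)\,dz\,\Prob_\fade(df).
\end{equation*}
The integrand is non-negative, so Tonelli's theorem justifies this step and the interchange of integrals. The inner integral is exactly $|B_o(h^{-1}(tf))\cap\lefthalf|=A_{\mathcal O\cap\lefthalf}(h^{-1}(tf))$, with the area given by \eqref{e.areadiskleft} and \eqref{e.areadisktruc}. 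Taking the outer integral yields \eqref{e.mean_direct_general}.

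Finiteness follows from the bound $A_{\mathcal O\cap\lefthalf}(d)\le\pi d^2$ together with the assumed condition $\E[(h^{-1}(t\fade))^2]<\infty$. So $\Xi$ is locally finite in expectation on bounded intervals $(0,t]$, and it is therefore a well-defined point process on $\R^+$. The constant-mark case is immediate, since $\fade=c$ makes the expectation trivial.

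The only delicate step is the equivalence between $h(r)\le y$ and $r\le h^{-1}(y)$ at the boundary. That is where the left-continuity of $h$ is used, for example when $h$ is a step function. Everything else is a routine application of Campbell's theorem, and it needs only the first-moment (density) property of $\Phi$, not the Poisson property.
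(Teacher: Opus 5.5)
Your proposal is correct and is essentially the paper's proof: rewrite $\{L\le t\}$ as $\{\lVert Z\rVert\le h^{-1}(t\fade)\}$, condition on the mark, and apply Campbell's theorem to count the points of $\Phi$ in the clipped disk $B_o(h^{-1}(t\fade))\cap\lefthalf$. Your left-continuity check of the generalized inverse, your Tonelli interchange and your finiteness remark supply details the paper leaves implicit; the equivalence can fail only at the null point $r=0$ when $h^{-1}(y)=0$, which does not affect the area.
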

\begin{proof}
The event $\{L\leq t\}$ is equivalent to the event $\{h(\lVert Z\rVert)\leq t\fade\}$, hence to $\{\lVert Z\rVert \leq h^{-1}(t\fade)\}$. Conditioning on $\fade$ and applying Campbell's theorem to the count of points of $\Phi$ in the region $B_o(h^{-1}(t\fade))\cap \lefthalf$ gives the result.
\end{proof}

For the power law path model $\ell(r)=r^{-\beta}$, we have $h(r)=r^{\beta}$ and $h^{-1}(y)=y^{1/\beta}$, so the expression is simply
\begin{equation}\label{e.intensLOSpower}
\Measure_{\lefthalf}(t)=\lambda \E[A_{\mathcal O\cap \lefthalf}((t\fade)^{1/\beta})]\, .
\end{equation}
For the constant fading mark $\fade=1$, the mean measure is
\begin{equation}
\Measure_{\lefthalf}(t)=\lambda A_{\mathcal O\cap \lefthalf}(t^{1/\beta}) \, .
\end{equation}
When the wall passes through the receiver, the relevant region is exactly a half-disk. For a fixed radius, moving the wall away increases that area until the whole disk lies on the transmitter side.

\subsection{Reflected signals and virtual transmitters}\label{s.virtual}
Reflection gives the image point process $\Phi'=\{Z_i'\}$ the same constant density $\lambda$ on the right half-plane $\righthalf$. This follows because reflection preserves areas and does not require a Poisson assumption. For each image $Z_i'$, the reflected fading mark is $\fade_i'>0$; the reflection coefficient $\kappa$ is kept separate from this mark throughout. We assume that the mark pairs $(\fade_i,\fade_i')$ are independent across transmitters, have the same joint distribution, and are independent of the transmitter process $\Phi$. The generic reflected mark is $\fade'$, for which we also assume the finite expected area condition. Thus the reflected received power is $\kappa\fade_i'\ell(\lVert Z_i'\rVert)$. The fading marks on the two paths of a transmitter may be dependent. When a result needs common fading, we state the mark equality $\fade_i'=\fade_i$ explicitly. For reflection coefficient $\kappa=0$, we take the reflected process $\Xi'$ to be empty and set the reflected mean measure $\Measure_{\righthalf}(t)=0$. For a positive reflection coefficient $\kappa>0$, define the reflected inverse received power $L_i'$ by
\begin{equation}
    L_i' \triangleq \frac{1}{\kappa\, \ell(\lVert Z_i'\rVert) \fade_i'}\,, \quad Z_i' \in \Phi' \, .
\end{equation}
The reflected inverse powers form the point process $\Xi'=\{L_i'\}$ on $\R^+$. Campbell's theorem and the area $A_{\mathcal O\cap \righthalf}(d)$ in expression~\eqref{e.areadisktruc} give the reflected counterpart of Proposition~\ref{p.los}.

\begin{Proposition}\label{p.nlos}
For a transmitter process with constant density $\lambda$ and the mark assumptions above, the inverse reflected received powers $\Xi'=\{L_i'\}_{i=1}^{\infty}$ have, for a positive reflection coefficient $\kappa>0$, the mean measure
\begin{equation}\label{e.mean_reflected_general}
\Measure_{\righthalf}(t)=\lambda \E\big[A_{\mathcal O\cap \righthalf}(h^{-1} (t\kappa\fade'))\big] \,.
\end{equation}
If the reflected fading mark $\fade'$ is a constant $c>0$, this simplifies to
\begin{equation}
\Measure_{\righthalf}(t)=\lambda A_{\mathcal O\cap \righthalf}(h^{-1} (t\kappa c))\,.
\end{equation}
In particular, unit fading gives the mean measure $\Measure_{\righthalf}(t)=\lambda A_{\mathcal O\cap \righthalf}(h^{-1}(t\kappa))$. For reflection coefficient $\kappa=0$, the mean measure $\Measure_{\righthalf}(t)=0$ by convention.
\end{Proposition}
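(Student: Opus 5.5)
The plan mirrors the proof of Proposition~\ref{p.los}, with the image process $\Phi'$ on $\righthalf$ in place of $\Phi$ on $\lefthalf$. First, the case $\kappa=0$ holds by the stated convention. For $\kappa>0$ we rewrite the threshold event. Since $\ell=1/h$, the event $\{L'\le t\}$ is $\{h(\lVert Z'\rVert)\le t\kappa\fade'\}$. Because $h$ is left-continuous and non-decreasing with generalized inverse $h^{-1}(y)=\inf\{x:h(x)>y\}$, this event coincides with $\{\lVert Z'\rVert\le h^{-1}(t\kappa\fade')\}$, up to the same boundary conventions already accepted in Proposition~\ref{p.los}. So each image point is counted exactly when it lies in the disk $B_o(h^{-1}(t\kappa\fade'))$. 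Its support lies in $\righthalf$, so the relevant region is the clipped disk $B_o(h^{-1}(t\kappa\fade'))\cap\righthalf$.

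Next we write the mean count as a Campbell sum over the image process:
\[
\Measure_{\righthalf}(t)=\E\Big[\sum_{Z'\in\Phi'}\Ind\big(\lVert Z'\rVert\le h^{-1}(t\kappa\fade')\big)\Big].
\]
The marks need care here. The reflected marks $\fade_i'$ attach to $Z_i'$ through the reflection map $z\mapsto(2a-x,y)$, which is a deterministic bijection from $\lefthalf$ to $\righthalf$. The pairs $(\fade_i,\fade_i')$ are i.i.d.\ and independent of $\Phi$. Therefore the marked image process $\{(Z_i',\fade_i')\}$ is an independently marked process, with mark law equal to the marginal law of $\fade'$. Any dependence between $\fade_i$ and $\fade_i'$ is irrelevant, because only $\fade'$ enters $L'$.

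The image process $\Phi'$ has constant density $\lambda$ on $\righthalf$, since reflection preserves area. The main obstacle is justifying that Campbell's theorem applies to the marked image process in this non-Poisson setting. To handle it, we can rewrite the sum directly over $\Phi$ by pulling back the indicator. The pulled-back region is the reflection of $B_o(d)\cap\righthalf$, which has the same area. We then apply the marked Campbell formula for independently marked $\Phi$ exactly as in Proposition~\ref{p.los}. Conditioning on the mark, this gives $\lambda\,\E[|B_o(h^{-1}(t\kappa\fade'))\cap\righthalf|]=\lambda\,\E[A_{\mathcal O\cap\righthalf}(h^{-1}(t\kappa\fade'))]$, and the expected area is finite by the finite expected area condition on $\fade'$. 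This is \eqref{e.mean_reflected_general}.

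The remaining statements are immediate specializations. A constant mark $\fade'=c$ removes the expectation, giving $\lambda A_{\mathcal O\cap\righthalf}(h^{-1}(t\kappa c))$, and $c=1$ gives the unit-fading form. The closed form of the area comes from \eqref{e.areadisktruc}.
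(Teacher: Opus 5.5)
Your proposal is correct and follows the paper's route: the paper obtains this result as the reflected counterpart of Proposition~\ref{p.los}, applying Campbell's theorem to the constant-density image process $\Phi'$ on $\righthalf$ and conditioning on $\fade'$ to count points in the clipped disk $B_o(h^{-1}(t\kappa\fade'))\cap\righthalf$. Your extra step of pulling the indicator back to $\Phi$ is a clean justification of the same computation: the pulled-back region is $B_{(2a,0)}(d)\cap\lefthalf$, the mirror image of $B_o(d)\cap\righthalf$. Also, left-continuity of $h$ makes the equivalence $\{h(\lVert Z'\rVert)\le y\}=\{\lVert Z'\rVert\le h^{-1}(y)\}$ exact, not merely exact up to boundary conventions.
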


Each direct inverse power remains paired with the reflected inverse power from the same transmitter. We therefore do not treat the two propagation processes as independent. For example, when the wall distance is $a=0$, the reflection coefficient is $\kappa=1$, and common fading is used, the paired values satisfy $L_i'=L_i$.

\section{Separate and recombined signals}\label{s.both}

We treat the direct and reflected paths in two distinct ways. In the first, we keep their received powers separate and superpose the corresponding signal point processes. In the second, we add the two powers from each transmitter and record one recombined signal. Figure~\ref{g.pathcounting} shows the difference on the line of inverse received powers. The two points associated with one transmitter can both lie beyond the threshold, while their recombined inverse power falls below it.

\begin{figure}[t]
\centering
% Figure source: fig_inverse_power_counting.tex
% TikZ source for Figure 3 (inverse received-power counting).
% Kept separate from the main article source by design.
% Native TikZ geometry at approximately 80 percent of the former printed width.
% Same inverse received-power scale on all three rows. Locations are proportional to
% L_i/t=1.2, L_i'/t=2.4, L_i^+/t=0.8 and, for kappa=1, L_i/((1+kappa)t)=0.6.
\begingroup
\definecolor{countBlue}{RGB}{0,114,178}
\definecolor{countOrange}{RGB}{213,94,0}
\definecolor{countGreen}{RGB}{0,120,98}
\begin{tikzpicture}[scale=0.92,x=1cm,y=1cm,font=\small,
    draw=black,text=black,line width=0.7pt]
\path[use as bounding box] (0,-4.90) rectangle (7.65,1.0);
% Counted intervals. Fill and line style are independent of revision colours.
\fill[black!5] (0.4,-0.2) rectangle (2.7,0.2);
\fill[black!5] (0.4,-2.2) rectangle (2.7,-1.8);
\fill[black!5] (0.4,-4.2) rectangle (2.7,-3.8);
\draw[dashed,line width=0.65pt] (2.7,0.27) -- (2.7,-0.3);
\draw[dashed,line width=0.65pt] (2.7,-1.73) -- (2.7,-2.23);
\draw[dashed,line width=0.65pt] (2.7,-3.73) -- (2.7,-4.23);
\node[anchor=south] at (2.7,0.35) {$t$};
\node[anchor=south] at (2.7,-1.65) {$t$};
\node[anchor=south] at (2.7,-3.65) {$t$};
\node[anchor=west] at (0.1,0.88) {\textbf{(a)} Paths kept separate};
\node[anchor=west] at (0.1,-1.12) {\textbf{(b)} Powers recombined};
\node[anchor=west] at (0.1,-3.12) {\textbf{(c)} Pointwise bound};
\draw[->] (0.4,0) -- (7.45,0);
\draw[->] (0.4,-2) -- (7.45,-2);
\draw[->] (0.4,-4) -- (7.45,-4);
\draw (0.4,-0.08) -- (0.4,0.08);
\draw (0.4,-2.08) -- (0.4,-1.92);
\draw (0.4,-4.08) -- (0.4,-3.92);
\node[anchor=north] at (0.4,-0.16) {$0$};
\node[anchor=north] at (0.4,-2.16) {$0$};
\node[anchor=north] at (0.4,-4.16) {$0$};
% Circle: direct. Triangle: reflected. Square: recombined. Diamond: lower bound.
\filldraw[draw=countBlue,fill=countBlue] (3.16,0) circle[radius=3.2pt];
\filldraw[draw=countOrange,fill=countOrange]
    (5.92,0.13) -- (5.785,-0.10) -- (6.055,-0.10) -- cycle;
\node[anchor=north] at (3.16,-0.16) {$L_i$};
\node[anchor=north] at (5.92,-0.16) {$L_i'$};
\filldraw[draw=countGreen,fill=countGreen]
    (2.142,-2.098) rectangle (2.338,-1.902);
\node[anchor=north] at (2.24,-2.16) {$L_i^{+}$};
\node[anchor=east] at (7.55,0.45) {No points in the interval $(0,t]$};
\node[anchor=east] at (7.55,-1.55) {One point in the interval $(0,t]$};
% Panel (c): L_i/(1+kappa) = 0.6t, L_i^+=0.8t, L_i=1.2t for kappa=1.
\draw[countOrange,fill=white,line width=0.85pt,rotate around={45:(1.78,-4)}]
    (1.682,-4.098) rectangle (1.878,-3.902);
\filldraw[draw=countGreen,fill=countGreen]
    (2.142,-4.098) rectangle (2.338,-3.902);
\filldraw[draw=countBlue,fill=countBlue] (3.16,-4) circle[radius=3.2pt];
\node[anchor=east] at (1.59,-3.67) {$L_i/(1+\kappa)$};
\node[anchor=north] at (2.24,-4.16) {$L_i^{+}$};
\node[anchor=north] at (3.16,-4.16) {$L_i$};
\node[anchor=east] at (7.35,-3.55) {$L_i/(1+\kappa)\leq L_i^{+}\leq L_i$};
\node[anchor=north east] at (7.40,-4.48) {inverse received power};
\end{tikzpicture}
\endgroup
\caption{Three views of the inverse received powers from one transmitter on the same scale. The shaded intervals mark the interval $(0,t]$. (a) Both paths fall to the right of the threshold $t$, so neither is counted. (b) In this particular example, recombination moves the inverse power to the left of the threshold, so one transmitter is counted. (c) The simpler pointwise bound of Proposition~\ref{p.sandwich}, valid under the mark inequality $\fade_i'\leq\fade_i$, places the recombined inverse power between $L_i/(1+\kappa)$ and $L_i$. The numerical example uses reflection coefficient $\kappa=1$, direct inverse received power $L_i=1.2t$, and reflected inverse received power $L_i'=2.4t$. Hence recombined inverse power $L_i^{+}=0.8t$ and lower bound $L_i/(1+\kappa)=0.6t$.}
\label{g.pathcounting}
\end{figure}

\subsection{Signals kept separate}\label{s.separate}
Keeping the paths separate amounts to superposing their two signal point processes. We denote the resulting process by
\begin{equation}\label{e.signal_union}
\Xi_{\cup}\triangleq \Xi \cup \Xi'\,.
\end{equation}
The union symbol $\cup$ denotes point process superposition, interpreted as a multiset union so that coincident points retain their multiplicity. It acts on the two point processes: both inverse power points from a transmitter remain present. Each point can also be marked as direct or reflected. The mean measure $\Measure_{\cup}(t)$ of the superposed signal process $\Xi_{\cup}$ is
\begin{equation}\label{e.intenssep}
\Measure_{\cup}(t)=\Measure_{\lefthalf}(t)+\Measure_{\righthalf}(t)\,.
\end{equation}
This identity follows from the linearity of expectation: point counts add under superposition even when the two point processes are dependent. It holds for the general transmitter process considered here and does not require the Poisson assumption. The quantity $\Measure_{\cup}$ should not be confused with the full-plane mean measure $\Measure_{\R^2}$, the mean measure of a field of real transmitters filling the entire plane with no wall. For a lossless wall with reflection coefficient $\kappa=1$ and equally distributed path marks, $\fade'\overset{d}{=}\fade$, we have $A_{\mathcal O\cap \lefthalf}(d)+A_{\mathcal O\cap \righthalf}(d)=\pi d^2$, and therefore
\begin{equation}\label{e.fullplane}
\Measure_{\cup}(t)=\Measure_{\R^2}(t)\,.
\end{equation}
Thus a lossless wall unfolds the occupied half-plane onto the full plane at the level of mean measure. For a lossy wall with reflection coefficient $0\leq\kappa<1$ and equally distributed path marks, attenuation can only reduce the reflected contribution, giving $\Measure_{\cup}(t)\leq\Measure_{\R^2}(t)$. For example, with unit fading and power-law path loss, equality holds when the threshold radius $t^{1/\beta}\leq a$. The direct threshold disk then lies entirely to the left of the wall, unlike the clipped disk in Figure~\ref{g.thresholdregions}(a), and no reflected path qualifies.

Under common fading, a reflected path cannot be stronger than its paired direct path. Indeed, the geometric inequality~\eqref{e.rhatgeq} and the bound $\kappa\leq1$ on the reflection coefficient give
\begin{equation}\label{e.path_power_order}
\kappa\fade_i\ell(\hat r_{1,i})\leq \fade_i\ell(r_i).
\end{equation}
Hence the largest path power is attained by a direct path. This pointwise observation does not require Poisson transmitters; the corresponding distribution of the smallest inverse received power is derived in Section~\ref{s.poisson}, where the Poisson assumption is imposed.

\subsection{Signals recombined}\label{s.recombined}
This approach adds the direct and reflected received powers from each transmitter before recording one signal point. We denote the resulting inverse received power by $L_i^{+}$. The superscript $+$ records the addition of received powers; it does not mean the sum $L_i+L_i'$ of the inverse powers. For a positive reflection coefficient, $\kappa>0$, the inverse received power is the parallel sum
\begin{equation}\label{e.parallel}
\begin{aligned}
L_i^{+} &\triangleq \left(\frac{1}{L_i}+\frac{1}{L_i'}\right)^{-1} \\
&= \frac{1}{\fade_i\,\ell(r_i)+\kappa\fade_i'\,\ell(\hat{r}_{1,i})}\,.
\end{aligned}
\end{equation}
For the transmitter $Z_i=(x_i,y_i)$, the direct distance is $r_i=\lVert Z_i\rVert$, and the image distance $\hat r_{1,i}$ follows from expression~\eqref{e.rhat} with coordinates $x=x_i$ and $y=y_i$. The final equality in expression~\eqref{e.parallel} also applies for reflection coefficient $\kappa=0$, giving the identity $L_i^{+}=L_i$. Under common fading, the mark equality $\fade_i'=\fade_i$ holds. With power-law path loss $\ell(r)=r^{-\beta}$, expression~\eqref{e.parallel} becomes
\begin{equation}\label{e.parallelpower}
L_i^{+} = \frac{1}{\fade_i\left(r_i^{-\beta}+\kappa\, \hat{r}_{1,i}^{-\beta}\right)}\,,
\end{equation}
which we use for the sandwich bound and the geometric proof below. The recombined inverse powers form the signal point process $\Xi_{+}\triangleq\{L_i^{+}\}_{i=1}^{\infty}$. The subscript $+$ carries the same convention: the received powers are added before mapping to inverse power.
For a positive threshold $t>0$, Campbell's theorem gives the mean measure $\Measure_{+}(t)$ of the signal point process $\Xi_{+}$ in the same way as expression~\eqref{intensity1}, without a Poisson assumption:
\begin{equation}\label{e.intenscombdef}
\begin{aligned}
\Measure_{+}(t)
&= \E\left[\sum_{Z\in \Phi} \Ind\big(L^{+}\leq t\big)\right] \\
&= \lambda\int_{\lefthalf}
\Prob\!\left(\fade r^{-\beta}+\kappa\fade'\hat r_1^{-\beta}
    \geq\frac{1}{t}\right)\,dz\,.
\end{aligned}
\end{equation}
Here the distances $r$ and $\hat r_1$ are the direct and image distances associated with the location $z$. At threshold $t=0$, the mean measure is zero.

Unlike the superposed case in Section~\ref{s.separate}, the region in expression~\eqref{e.intenscombdef} does not reduce to a clipped disk. Under common fading, writing the defining inequality as $(r^{\beta}-t\fade)(\hat r_1^{\beta}-t\fade\kappa)\leq \kappa (t\fade)^2$ gives a Cassinian-type region. Its structure recalls the classical Cassinian oval, defined by a constant product of distances to two fixed points~\cite[Section~5.16]{lawrence1972catalog}; we evaluate its area numerically.

\subsection{Bounds and asymptotics}\label{s.bounds}
The threshold region does not reduce to the disk geometry of the separate paths. We therefore use the disk geometry to bound its mean measure and determine its growth at large thresholds. The first result allows arbitrary positive paired fading marks. When the reflected mark is no larger than its direct counterpart, it also gives the simpler sandwich bounds used in Figure~\ref{g.pathcounting}(c).

\begin{Proposition}[Sandwich bounds]\label{p.sandwich}
Suppose that the transmitter process $\Phi$ has constant density $\lambda>0$, uses power-law path loss with exponent $\beta>0$, and has positive mark pairs $(\fade_i,\fade_i')$ satisfying the assumptions of Section~\ref{s.virtual}. For every threshold $t\geq0$, the recombined mean measure satisfies
\begin{equation}\label{e.sandwich_general}
\begin{aligned}
\Measure_{\lefthalf}(t)
&\leq \Measure_{+}(t)\\
&\leq \lambda\E\!\left[A_{\mathcal O\cap\lefthalf}
\!\left(\big(t(\fade+\kappa\fade')\big)^{1/\beta}\right)\right].
\end{aligned}
\end{equation}
If the mark inequality $\fade_i'\leq\fade_i$ holds almost surely, then the upper bound simplifies to
\begin{equation}\label{e.sandwich}
\Measure_{\lefthalf}(t)\leq\Measure_{+}(t)
\leq\Measure_{\lefthalf}((1+\kappa)t).
\end{equation}
In particular, the latter bounds hold under common fading $\fade_i'=\fade_i$.
\end{Proposition}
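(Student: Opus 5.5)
The plan is a pointwise comparison of indicator events for each transmitter, followed by Campbell's theorem conditioned on the marks, exactly as in the proof of Proposition~\ref{p.los}. All three quantities have the form $\lambda\int_{\lefthalf}\Prob(\cdot)\,dz$, with the integrand determined by an event involving $z$ and the mark pair $(\fade,\fade')$. It therefore suffices to establish event inclusions that hold for every location $z\in\lefthalf$ and every realisation of the marks. For $t=0$ all three measures vanish, so I take $t>0$. If $\kappa=0$, then $L_i^{+}=L_i$ and the upper bound reduces to $\Measure_{\lefthalf}(t)$ itself, so equality holds throughout.

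For the lower bound, the reflected term is nonnegative, so $\fade r^{-\beta}+\kappa\fade'\hat r_1^{-\beta}\geq \fade r^{-\beta}$. Hence $\{L\leq t\}\subseteq\{L^{+}\leq t\}$, i.e.\ $L^{+}\leq L$ pointwise. Summing indicators over $\Phi$ and taking expectations gives $\Measure_{\lefthalf}(t)\leq\Measure_{+}(t)$.

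For the upper bound, I use the geometric inequality~\eqref{e.rhatgeq}, $\hat r_1\geq r$. Since $\beta>0$, this gives $\hat r_1^{-\beta}\leq r^{-\beta}$, and so
\[
\fade r^{-\beta}+\kappa\fade'\hat r_1^{-\beta}\leq (\fade+\kappa\fade')\,r^{-\beta}.
\]
Consequently the event $\{L^{+}\leq t\}$ implies $(\fade+\kappa\fade')r^{-\beta}\geq 1/t$, which is the event $r\leq (t(\fade+\kappa\fade'))^{1/\beta}$. Conditioning on the mark pair, which is independent of $\Phi$, Campbell's theorem bounds the conditional expected count by $\lambda$ times the area of $B_o\big((t(\fade+\kappa\fade'))^{1/\beta}\big)\cap\lefthalf$, namely $\lambda A_{\mathcal O\cap\lefthalf}\big((t(\fade+\kappa\fade'))^{1/\beta}\big)$. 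Taking the expectation over the marks yields~\eqref{e.sandwich_general}. The expectation could be infinite, but the inequality holds either way, and it is finite under $\E[\fade^{2/\beta}],\E[\fade'^{2/\beta}]<\infty$ since $A_{\mathcal O\cap\lefthalf}(d)\leq\pi d^2$.

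For the simplification~\eqref{e.sandwich}, assume $\fade'\leq\fade$ almost surely. Then $\fade+\kappa\fade'\leq(1+\kappa)\fade$. The clipped-disk area $d\mapsto A_{\mathcal O\cap\lefthalf}(d)$ is non-decreasing, because the regions $B_o(d)\cap\lefthalf$ are nested. Therefore
\[
\E\big[A_{\mathcal O\cap\lefthalf}\big((t(\fade+\kappa\fade'))^{1/\beta}\big)\big]\leq \E\big[A_{\mathcal O\cap\lefthalf}\big(((1+\kappa)t\fade)^{1/\beta}\big)\big],
\]
and by~\eqref{e.intensLOSpower} the right-hand side equals $\Measure_{\lefthalf}((1+\kappa)t)/\lambda$. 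Common fading is the special case $\fade'=\fade$.

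None of these steps is a real obstacle. The only point needing care is justifying Campbell's theorem with dependent paired marks. Because the pairs are i.i.d.\ and independent of $\Phi$, the marked process is independently marked with mark space $(0,\infty)^2$. Campbell's theorem for independently marked processes then gives $\E\sum_{Z}g(Z,\fade,\fade')=\lambda\int_{\lefthalf}\E[g(z,\fade,\fade')]\,dz$ for nonnegative $g$, and Fubini (Tonelli) swaps the integral and expectation to produce the area formula.
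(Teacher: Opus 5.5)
Your proposal is correct and follows the paper's proof. Both use the pathwise sandwich $\fade r^{-\beta}\le \fade r^{-\beta}+\kappa\fade'\hat r_1^{-\beta}\le(\fade+\kappa\fade')r^{-\beta}$ from $\hat r_1\ge r$, then Campbell's theorem conditioned on the mark pair. The only cosmetic difference is in the simplified bound: you use monotonicity of the clipped-disk area after taking expectations, while the paper states the equivalent pointwise bound $L/(1+\kappa)\le L^{+}\le L$ before taking expected counts.
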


\begin{proof}
The threshold case $t=0$ is immediate. For a positive threshold $t>0$, the geometric inequality~\eqref{e.rhatgeq} gives $\hat r_1\geq r$. For arbitrary positive mark pairs, the received powers therefore satisfy
\begin{equation}\label{e.pathwise_sandwich}
\fade r^{-\beta}\leq
\fade r^{-\beta}+\kappa\fade'\hat r_1^{-\beta}
\leq (\fade+\kappa\fade')r^{-\beta}.
\end{equation}
A transmitter passing the direct threshold must also pass the recombined threshold. The last expression gives a disk of radius $\big(t(\fade+\kappa\fade')\big)^{1/\beta}$, clipped by the wall. Taking expected counts using Campbell's theorem proves the bounds in expression~\eqref{e.sandwich_general}.
If the mark inequality $\fade'\leq\fade$ holds, the right-hand side of inequality~\eqref{e.pathwise_sandwich} is at most $(1+\kappa)\fade r^{-\beta}$. Equivalently, the inverse powers satisfy $L/(1+\kappa)\leq L^{+}\leq L$. Taking expected counts again proves the bounds in expression~\eqref{e.sandwich}. Figure~\ref{g.pathcounting}(c) illustrates this comparison for common marks.
\end{proof}

The limiting cases agree with the geometry. At wall distance $a=0$, the general upper bound in expression~\eqref{e.sandwich_general} is exact, because the direct and image distances coincide. The simpler upper bound in expression~\eqref{e.sandwich} is then exact under common fading. For reflection coefficient $\kappa=0$, both bounds are exact.

Even under common fading $\fade_i'=\fade_i$, the two sides of the bound in expression~\eqref{e.sandwich} differ asymptotically by the factor $(1+\kappa)^{2/\beta}$. The bounds alone therefore do not give the exact leading coefficient. We next find that coefficient by first studying the unit fading geometry and then allowing general paired marks.

Figure~\ref{g.thresholdregions} shows the geometry behind the next result.
Panel (a) recalls the clipped disk for a direct path. Panel (b) shows the rescaled recombined region and the half-disk it approaches.

\begin{widefigure}
\centering
% Figure source: fig_threshold_regions.tex
% TikZ source for Figure 4 (threshold regions).
% Kept separate from the main article source by design.
% Two columns give the callouts their own space without shrinking the text.
% Native-width TikZ figure. Scale coordinates, not text.
% True TikZ render. Native width 16.1 cm, intended to span both columns.
% Panel (b): beta=3, kappa=1, epsilon=0.35, unit marks.
% Its curved edge is computed from the threshold equation, not hand sketched.
% Local colours are independent of the manuscript revision colours.
\begingroup
\color{black}
\definecolor{thresholdBlue}{RGB}{0,114,178}
\definecolor{thresholdRed}{RGB}{178,45,48}
\begin{tikzpicture}[
  scale=0.6525,x=1cm,y=1cm,
  font=\fontsize{11}{13}\selectfont,
  every node/.style={inner sep=2pt,outer sep=0pt},
  line cap=round,line join=round,
  thresholdwall/.style={draw=black,line width=1.1pt},
  thresholdcallout/.style={draw=black,line width=0.65pt,->,>=stealth},
  thresholdlimit/.style={draw=black,line width=0.85pt,dash pattern=on 4pt off 3pt}
]
\path[use as bounding box] (0,-3.0) rectangle (16.1,3.30);

% (a) Direct-path disk, clipped at x=a. The receiver is the circle centre.
\begin{scope}
  \clip (1.3,-2.2) rectangle (4.45,2.2);
  \fill[thresholdBlue!18] (3.50,0) circle[radius=2.10];
\end{scope}
\draw[black,line width=0.9pt] (3.50,0) circle[radius=2.10];
\draw[thresholdwall] (4.45,-2.48) -- (4.45,2.48);
\draw[thresholdwall] (4.34,-2.48) -- (4.56,-2.48);
\draw[thresholdwall] (4.34,2.48) -- (4.56,2.48);
\node[anchor=south] at (4.45,2.86) {wall};
\node at (2.91,0.91) {$B_o(d)\cap\mathcal L$};
\fill[black] (3.50,0) circle[radius=2.1pt];
\node[anchor=north east] at (3.43,-0.08) {$o$};
\node at (3.50,-2.91) {(a) Direct-path threshold};

% (b) The true recombined threshold region, clipped at u=epsilon.
\path[fill=thresholdRed!12,draw=thresholdRed,line width=1.0pt]
  plot coordinates {
    (12.0333699,2.0173447)
    (12.0053202,2.0172249)
    (11.9774095,2.0168679)
    (11.9496370,2.0162767)
    (11.9220021,2.0154543)
    (11.8945039,2.0144035)
    (11.8671415,2.0131266)
    (11.8399139,2.0116259)
    (11.8128202,2.0099033)
    (11.7858595,2.0079606)
    (11.7590307,2.0057995)
    (11.7323330,2.0034213)
    (11.7057654,2.0008272)
    (11.6793270,1.9980185)
    (11.6530170,1.9949958)
    (11.6268345,1.9917601)
    (11.6007788,1.9883120)
    (11.5748491,1.9846520)
    (11.5490449,1.9807805)
    (11.5233656,1.9766979)
    (11.4978107,1.9724043)
    (11.4723797,1.9679000)
    (11.4470724,1.9631851)
    (11.4218884,1.9582595)
    (11.3968276,1.9531232)
    (11.3718899,1.9477761)
    (11.3470753,1.9422181)
    (11.3223838,1.9364491)
    (11.2978156,1.9304690)
    (11.2733709,1.9242774)
    (11.2490501,1.9178742)
    (11.2248536,1.9112591)
    (11.2007818,1.9044321)
    (11.1768353,1.8973927)
    (11.1530148,1.8901410)
    (11.1293210,1.8826765)
    (11.1057546,1.8749993)
    (11.0823166,1.8671090)
    (11.0590079,1.8590056)
    (11.0358295,1.8506890)
    (11.0127825,1.8421590)
    (10.9898680,1.8334156)
    (10.9670872,1.8244589)
    (10.9444415,1.8152887)
    (10.9219320,1.8059053)
    (10.8995602,1.7963085)
    (10.8773276,1.7864987)
    (10.8552355,1.7764759)
    (10.8332856,1.7662404)
    (10.8114794,1.7557924)
    (10.7898185,1.7451324)
    (10.7683047,1.7342605)
    (10.7469395,1.7231772)
    (10.7257248,1.7118830)
    (10.7046623,1.7003784)
    (10.6837538,1.6886639)
    (10.6630011,1.6767401)
    (10.6424062,1.6646077)
    (10.6219710,1.6522674)
    (10.6016972,1.6397199)
    (10.5815870,1.6269660)
    (10.5616421,1.6140067)
    (10.5418647,1.6008427)
    (10.5222567,1.5874750)
    (10.5028202,1.5739047)
    (10.4835571,1.5601329)
    (10.4644694,1.5461605)
    (10.4455594,1.5319888)
    (10.4268288,1.5176190)
    (10.4082800,1.5030523)
    (10.3899149,1.4882900)
    (10.3717355,1.4733336)
    (10.3537440,1.4581843)
    (10.3359424,1.4428437)
    (10.3183328,1.4273132)
    (10.3009173,1.4115943)
    (10.2836980,1.3956888)
    (10.2666768,1.3795981)
    (10.2498559,1.3633241)
    (10.2332372,1.3468683)
    (10.2168229,1.3302327)
    (10.2006150,1.3134189)
    (10.1846155,1.2964289)
    (10.1688263,1.2792646)
    (10.1532495,1.2619278)
    (10.1378871,1.2444207)
    (10.1227410,1.2267451)
    (10.1078132,1.2089033)
    (10.0931056,1.1908972)
    (10.0786201,1.1727291)
    (10.0643586,1.1544010)
    (10.0503230,1.1359153)
    (10.0365152,1.1172743)
    (10.0229369,1.0984801)
    (10.0095900,1.0795351)
    (9.9964762,1.0604417)
    (9.9835974,1.0412023)
    (9.9709553,1.0218194)
    (9.9585515,1.0022953)
    (9.9463878,0.9826327)
    (9.9344658,0.9628339)
    (9.9227872,0.9429017)
    (9.9113536,0.9228386)
    (9.9001664,0.9026472)
    (9.8892274,0.8823302)
    (9.8785380,0.8618902)
    (9.8680996,0.8413300)
    (9.8579138,0.8206523)
    (9.8479820,0.7998599)
    (9.8383055,0.7789556)
    (9.8288858,0.7579422)
    (9.8197240,0.7368225)
    (9.8108217,0.7155993)
    (9.8021799,0.6942757)
    (9.7937999,0.6728545)
    (9.7856829,0.6513386)
    (9.7778301,0.6297310)
    (9.7702426,0.6080346)
    (9.7629214,0.5862525)
    (9.7558676,0.5643877)
    (9.7490823,0.5424432)
    (9.7425663,0.5204220)
    (9.7363207,0.4983272)
    (9.7303463,0.4761620)
    (9.7246440,0.4539293)
    (9.7192146,0.4316323)
    (9.7140588,0.4092742)
    (9.7091775,0.3868580)
    (9.7045714,0.3643870)
    (9.7002410,0.3418643)
    (9.6961871,0.3192931)
    (9.6924101,0.2966765)
    (9.6889108,0.2740177)
    (9.6856895,0.2513201)
    (9.6827468,0.2285867)
    (9.6800830,0.2058208)
    (9.6776986,0.1830256)
    (9.6755939,0.1602044)
    (9.6737692,0.1373605)
    (9.6722247,0.1144969)
    (9.6709608,0.0916171)
    (9.6699776,0.0687242)
    (9.6692751,0.0458216)
    (9.6688536,0.0229124)
    (9.6687131,0.0000000)
    (9.6688536,-0.0229124)
    (9.6692751,-0.0458216)
    (9.6699776,-0.0687242)
    (9.6709608,-0.0916171)
    (9.6722247,-0.1144969)
    (9.6737692,-0.1373605)
    (9.6755939,-0.1602044)
    (9.6776986,-0.1830256)
    (9.6800830,-0.2058208)
    (9.6827468,-0.2285867)
    (9.6856895,-0.2513201)
    (9.6889108,-0.2740177)
    (9.6924101,-0.2966765)
    (9.6961871,-0.3192931)
    (9.7002410,-0.3418643)
    (9.7045714,-0.3643870)
    (9.7091775,-0.3868580)
    (9.7140588,-0.4092742)
    (9.7192146,-0.4316323)
    (9.7246440,-0.4539293)
    (9.7303463,-0.4761620)
    (9.7363207,-0.4983272)
    (9.7425663,-0.5204220)
    (9.7490823,-0.5424432)
    (9.7558676,-0.5643877)
    (9.7629214,-0.5862525)
    (9.7702426,-0.6080346)
    (9.7778301,-0.6297310)
    (9.7856829,-0.6513386)
    (9.7937999,-0.6728545)
    (9.8021799,-0.6942757)
    (9.8108217,-0.7155993)
    (9.8197240,-0.7368225)
    (9.8288858,-0.7579422)
    (9.8383055,-0.7789556)
    (9.8479820,-0.7998599)
    (9.8579138,-0.8206523)
    (9.8680996,-0.8413300)
    (9.8785380,-0.8618902)
    (9.8892274,-0.8823302)
    (9.9001664,-0.9026472)
    (9.9113536,-0.9228386)
    (9.9227872,-0.9429017)
    (9.9344658,-0.9628339)
    (9.9463878,-0.9826327)
    (9.9585515,-1.0022953)
    (9.9709553,-1.0218194)
    (9.9835974,-1.0412023)
    (9.9964762,-1.0604417)
    (10.0095900,-1.0795351)
    (10.0229369,-1.0984801)
    (10.0365152,-1.1172743)
    (10.0503230,-1.1359153)
    (10.0643586,-1.1544010)
    (10.0786201,-1.1727291)
    (10.0931056,-1.1908972)
    (10.1078132,-1.2089033)
    (10.1227410,-1.2267451)
    (10.1378871,-1.2444207)
    (10.1532495,-1.2619278)
    (10.1688263,-1.2792646)
    (10.1846155,-1.2964289)
    (10.2006150,-1.3134189)
    (10.2168229,-1.3302327)
    (10.2332372,-1.3468683)
    (10.2498559,-1.3633241)
    (10.2666768,-1.3795981)
    (10.2836980,-1.3956888)
    (10.3009173,-1.4115943)
    (10.3183328,-1.4273132)
    (10.3359424,-1.4428437)
    (10.3537440,-1.4581843)
    (10.3717355,-1.4733336)
    (10.3899149,-1.4882900)
    (10.4082800,-1.5030523)
    (10.4268288,-1.5176190)
    (10.4455594,-1.5319888)
    (10.4644694,-1.5461605)
    (10.4835571,-1.5601329)
    (10.5028202,-1.5739047)
    (10.5222567,-1.5874750)
    (10.5418647,-1.6008427)
    (10.5616421,-1.6140067)
    (10.5815870,-1.6269660)
    (10.6016972,-1.6397199)
    (10.6219710,-1.6522674)
    (10.6424062,-1.6646077)
    (10.6630011,-1.6767401)
    (10.6837538,-1.6886639)
    (10.7046623,-1.7003784)
    (10.7257248,-1.7118830)
    (10.7469395,-1.7231772)
    (10.7683047,-1.7342605)
    (10.7898185,-1.7451324)
    (10.8114794,-1.7557924)
    (10.8332856,-1.7662404)
    (10.8552355,-1.7764759)
    (10.8773276,-1.7864987)
    (10.8995602,-1.7963085)
    (10.9219320,-1.8059053)
    (10.9444415,-1.8152887)
    (10.9670872,-1.8244589)
    (10.9898680,-1.8334156)
    (11.0127825,-1.8421590)
    (11.0358295,-1.8506890)
    (11.0590079,-1.8590056)
    (11.0823166,-1.8671090)
    (11.1057546,-1.8749993)
    (11.1293210,-1.8826765)
    (11.1530148,-1.8901410)
    (11.1768353,-1.8973927)
    (11.2007818,-1.9044321)
    (11.2248536,-1.9112591)
    (11.2490501,-1.9178742)
    (11.2733709,-1.9242774)
    (11.2978156,-1.9304690)
    (11.3223838,-1.9364491)
    (11.3470753,-1.9422181)
    (11.3718899,-1.9477761)
    (11.3968276,-1.9531232)
    (11.4218884,-1.9582595)
    (11.4470724,-1.9631851)
    (11.4723797,-1.9679000)
    (11.4978107,-1.9724043)
    (11.5233656,-1.9766979)
    (11.5490449,-1.9807805)
    (11.5748491,-1.9846520)
    (11.6007788,-1.9883120)
    (11.6268345,-1.9917601)
    (11.6530170,-1.9949958)
    (11.6793270,-1.9980185)
    (11.7057654,-2.0008272)
    (11.7323330,-2.0034213)
    (11.7590307,-2.0057995)
    (11.7858595,-2.0079606)
    (11.8128202,-2.0099033)
    (11.8399139,-2.0116259)
    (11.8671415,-2.0131266)
    (11.8945039,-2.0144035)
    (11.9220021,-2.0154543)
    (11.9496370,-2.0162767)
    (11.9774095,-2.0168679)
    (12.0053202,-2.0172249)
    (12.0333699,-2.0173447)
  } -- cycle;

% The limiting half-disk is centred at o and has its diameter at u=0.
% Draw it after the fill so every dashed edge remains visible.
\draw[thresholdlimit] (11.45,2.10)
  arc[start angle=90,end angle=270,radius=2.10] -- cycle;

% The solid wall is the straight right edge of D_epsilon.
\draw[thresholdwall] (12.0333699,-2.48) -- (12.0333699,2.48);
\draw[thresholdwall] (11.9233699,-2.48) -- (12.1433699,-2.48);
\draw[thresholdwall] (11.9233699,2.48) -- (12.1433699,2.48);

% Labels occupy reserved spaces above and inside the diagram.
\node[anchor=south] at (9.46,2.86) {limiting half-disk};
\draw[thresholdcallout] (9.46,2.67) -- (9.8413067,1.3498540);

\node[anchor=south] at (13.78,2.86) {rescaled wall};
\draw[thresholdcallout] (13.23,2.69) -- (12.0333699,2.28);

\node[text=thresholdRed] at (10.56,0.67) {$D_\epsilon$};
% A small white halo keeps the receiver visible on the dashed diameter.
\fill[white] (11.45,0) circle[radius=3.2pt];
\fill[black] (11.45,0) circle[radius=2.1pt];
\node[anchor=north east] at (11.38,-0.08) {$o$};
\node at (11.45,-2.91) {(b) Recombined threshold};
\end{tikzpicture}
\endgroup

\caption{Threshold regions behind the mean measure calculations.
(a) The shaded direct-path region is the disk $B_o(d)$ clipped by the wall.
(b) The shaded recombined region $D_\epsilon$ lies on the transmitter side of the rescaled wall $u=\epsilon$. Its curved edge is computed from the threshold condition in the Appendix, using unit marks, path-loss exponent $\beta=3$, reflection coefficient $\kappa=1$, and rescaled wall distance $\epsilon=0.35$. Although nearly semicircular to the eye, this edge is not generally circular. The dashed outline is the limiting half-disk $D_0$, whose diameter passes through the receiver $o$. As the distance $\epsilon$ tends to zero, the region $D_\epsilon$ approaches $D_0$; the Appendix bounds it between two clipped disks with the same limit.}
\label{g.thresholdregions}
\end{widefigure}

\begin{Proposition}[Exact asymptotics]\label{p.asympt}
Suppose that the transmitter process $\Phi$ has constant density $\lambda>0$, fixed wall distance $a\geq0$, and path-loss exponent $\beta>0$. Let the positive mark pairs $(\fade_i,\fade_i')$ be independent across transmitters, have a common joint distribution, be independent of their locations, and satisfy $\E[\fade^{2/\beta}]+\E[(\fade')^{2/\beta}]<\infty$. The two marks within a pair need not share a distribution or be independent. As the threshold $t$ tends to infinity, the recombined mean measure satisfies
\begin{equation}\label{e.asympt}
\Measure_{+}(t)
\sim\frac{\pi\lambda}{2}\E[(\fade+\kappa\fade')^{2/\beta}]\,t^{2/\beta}.
\end{equation}
The ratio of the recombined and superposed mean measures therefore converges to
\begin{equation}\label{e.ratio_general_marks}
\frac{\Measure_{+}(t)}{\Measure_{\cup}(t)}
\longrightarrow
\frac{\E[(\fade+\kappa\fade')^{2/\beta}]}
{\E[\fade^{2/\beta}]+\kappa^{2/\beta}\E[(\fade')^{2/\beta}]}.
\end{equation}
\end{Proposition}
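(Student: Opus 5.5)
We rescale space by $s=t^{1/\beta}$ and apply dominated convergence to the expression~\eqref{e.intenscombdef} for $\Measure_{+}(t)$. Substituting $z=s w$ with $w=(u,v)$ gives $dz=s^2\,dw$. The half-plane $\lefthalf$ becomes $\{u\leq \epsilon\}$ with the rescaled wall distance $\epsilon=a/s\to0$. The direct distance becomes $r=s\lVert w\rVert$. The image distance becomes $\hat r_1=s\,\rho_\epsilon(w)$, where $\rho_\epsilon(w)=\sqrt{(2\epsilon-u)^2+v^2}$. The threshold condition $\fade r^{-\beta}+\kappa\fade'\hat r_1^{-\beta}\geq 1/t$ becomes
\begin{equation*}
g_\epsilon(w)\triangleq\fade\lVert w\rVert^{-\beta}+\kappa\fade'\rho_\epsilon(w)^{-\beta}\geq 1 ,
\end{equation*}
because the factors of $s^{-\beta}=1/t$ cancel. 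Hence
\begin{equation*}
\frac{\Measure_{+}(t)}{t^{2/\beta}}=\lambda\,\E\!\left[\int_{\R^2}\Ind(u\leq\epsilon)\,\Ind\big(g_\epsilon(w)\geq1\big)\,dw\right],
\end{equation*}
where we have used Fubini to move the expectation outside the integral. This is $\lambda$ times the expected area of the random region $D_\epsilon$ from Figure~\ref{g.thresholdregions}(b).

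Next we take the pointwise limit with the marks held fixed. As $\epsilon\to0$, $\rho_\epsilon(w)\to\lVert w\rVert$ and $\Ind(u\leq\epsilon)\to\Ind(u\leq 0)$ for $u\neq0$. Therefore $g_\epsilon(w)\to(\fade+\kappa\fade')\lVert w\rVert^{-\beta}$. The indicator $\Ind(g_\epsilon\geq1)$ then converges to $\Ind\big(\lVert w\rVert\leq(\fade+\kappa\fade')^{1/\beta}\big)$ at every $w$ off the circle $\lVert w\rVert=(\fade+\kappa\fade')^{1/\beta}$ and the line $u=0$. Both exceptional sets are Lebesgue-null, so the indicator product converges almost everywhere to the indicator of the half-disk $D_0$. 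This half-disk has radius $(\fade+\kappa\fade')^{1/\beta}$ and area $\tfrac{\pi}{2}(\fade+\kappa\fade')^{2/\beta}$.

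The main obstacle is domination, uniformly in $\epsilon$ and in the marks. By inequality~\eqref{e.rhatgeq}, $\rho_\epsilon\geq\lVert w\rVert$ on $u\leq\epsilon$, which is the pathwise sandwich~\eqref{e.pathwise_sandwich} after rescaling. So $D_\epsilon$ lies inside the disk of radius $(\fade+\kappa\fade')^{1/\beta}$, and the integrand is bounded by the indicator of that disk. The disk's area is integrable because
\begin{equation*}
(\fade+\kappa\fade')^{2/\beta}\leq C_\beta\big(\fade^{2/\beta}+(\fade')^{2/\beta}\big),
\end{equation*}
with $C_\beta=\max(1,2^{2/\beta-1})$ and $\kappa\leq1$. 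The moment hypothesis makes the right-hand side integrable. Dominated convergence on the product space of $w$ and the marks then gives the asymptotic~\eqref{e.asympt}.

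For the ratio, we apply the same rescaling to Propositions~\ref{p.los} and~\ref{p.nlos}. Take the clipped-disk area $A_{\mathcal O\cap\lefthalf}(d)$ with radius $d=(t\fade)^{1/\beta}$ and divide by $t^{2/\beta}$. The result is the area of a disk of radius $\fade^{1/\beta}$ clipped at $u\leq\epsilon$, and it is bounded by $\pi\fade^{2/\beta}$. So dominated convergence gives $\Measure_{\lefthalf}(t)\sim\frac{\pi\lambda}{2}\E[\fade^{2/\beta}]t^{2/\beta}$. In the same way, $\Measure_{\righthalf}(t)\sim\frac{\pi\lambda}{2}\kappa^{2/\beta}\E[(\fade')^{2/\beta}]t^{2/\beta}$; when $\kappa=0$ this holds trivially. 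Adding the two via~\eqref{e.intenssep} and dividing gives the limit~\eqref{e.ratio_general_marks}. The denominator is positive because $\fade>0$.
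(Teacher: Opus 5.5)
Your proof is correct and follows essentially the paper's route. Like the Appendix, you rescale by $t^{1/\beta}$, identify the limiting half-disk of radius $(\fade+\kappa\fade')^{1/\beta}$, dominate by the full disk of that radius using $\hat r_1\geq r$, apply dominated convergence over the marks, and use Propositions~\ref{p.los} and~\ref{p.nlos} for the denominator. The only difference is how you get area convergence at fixed marks: you use almost-everywhere convergence of indicators in one dominated-convergence step on the product space, while the paper uses the explicit inclusions $B_o(\rho-2\epsilon)\cap\{u\leq\epsilon\}\subseteq D_\epsilon\subseteq B_o(\rho)\cap\{u\leq\epsilon\}$, which also give an $O(a t^{-1/\beta})$ error rate in the unit-mark case.
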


\begin{proof}[Proof sketch]
Write the rescaled wall distance as $\epsilon=a t^{-1/\beta}$. After rescaling space by the factor $t^{-1/\beta}$, the distance $\epsilon$ tends to zero as the threshold $t$ tends to infinity. For fixed mark values $f,f'>0$, the rescaled direct and image distances then coincide in the limit, so the recombined threshold region approaches the left half of a disk with radius $(f+\kappa f')^{1/\beta}$, as illustrated in Figure~\ref{g.thresholdregions}(b). Bounding the finite-$\epsilon$ region between clipped disks gives convergence of its area. Averaging the limiting areas over the mark pair gives the asymptotic expression~\eqref{e.asympt}. Propositions~\ref{p.los} and~\ref{p.nlos} give the corresponding direct and reflected coefficients at large thresholds; adding them gives the ratio in expression~\eqref{e.ratio_general_marks}. The details are given in the Appendix.
\end{proof}

\begin{Remark}[A universal ratio]\label{r.ratio}
Under common fading with positive marks, the equality $\fade'=\fade$ holds. If the fractional moment $\E[\fade^{2/\beta}]$ is finite, it cancels from expression~\eqref{e.ratio_general_marks}. The limit then depends only on the reflection coefficient and path-loss exponent, not on the fading distribution, wall distance $a$, or transmitter density $\lambda$:
\begin{equation}\label{e.ratio}
\frac{\Measure_{+}(t)}{\Measure_{\cup}(t)}\ \longrightarrow\ \frac{(1+\kappa)^{2/\beta}}{1+\kappa^{2/\beta}}\,,\qquad t\to\infty\,.
\end{equation}
For a lossless wall with reflection coefficient $\kappa=1$, the limit is $2^{2/\beta-1}$. With path-loss exponent $\beta=3$, it is $2^{-1/3}\approx0.794$. By comparison, independent unit-mean exponential marks give $5/6$ for path-loss exponent $\beta=3$ and reflection coefficient $\kappa=1$, showing that the general limit depends on the distribution of the mark pair.

For unit fading, the ratio can exceed one at small thresholds, although the limit under common fading in expression~\eqref{e.ratio} is below one when the path-loss exponent $\beta>2$ and reflection coefficient $\kappa>0$. With a positive wall distance $a>0$ and reflection coefficient $0<\kappa\leq1$, the reflected process $\Xi'$ contributes no points for thresholds $0<t\leq a^{\beta}/\kappa$, yet recombination already enlarges the direct threshold region. Thus the ratio satisfies the inequality
\begin{equation}\label{e.ratioabove}
\frac{\Measure_{+}(t)}{\Measure_{\cup}(t)}>1,\qquad 0<t\leq\frac{a^{\beta}}{\kappa}.
\end{equation}
A reflected path can therefore be too weak to count on its own yet still push its paired transmitter across the recombined threshold. The ratio must therefore cross one under these unit fading assumptions. For fixed paired mark distributions and positive wall distance $a>0$, scaling shows more generally that the ratio depends on the threshold $t$ and wall distance $a$ only through $t/a^{\beta}$ and is independent of the transmitter density $\lambda$. Figure~\ref{g.combratio} illustrates this dependence for unit fading.
\end{Remark}

Figure~\ref{g.combintensity} illustrates the mean measure of the recombined signal process together with the sandwich bounds of Proposition~\ref{p.sandwich}, computed by simulation and by direct numerical integration of expression~\eqref{e.intenscombdef}. Figure~\ref{g.combratio} also shows the ratio in expression~\eqref{e.ratio} converging to its limit; for a lossless wall and path-loss exponent $\beta=3$, the limit is about $0.794$. The threshold applies to inverse received power, so moving right in Figure~\ref{g.combratio} means admitting weaker signals.

The simulations use Poisson transmitters as an example of a constant-density field. Figure~\ref{g.combintensity} uses 400 realizations in a half-disk of radius $10$ centred at the wall point $(a,0)$. The upper sandwich disk contains every plotted threshold region, so no finite-window correction is needed. Numerical integration locates each vertical boundary by the bisection method and integrates the cross-sectional lengths. Figure~\ref{g.combratio} uses rescaled coordinates; the open markers independently recompute the ratios in the original coordinates. Both figures use unit fading. The geometric results hold under their stated assumptions, while the simpler sandwich bound in expression~\eqref{e.sandwich} additionally requires the mark inequality $\fade'\leq\fade$ almost surely.

\begin{figure}[t]
\centering
% Figure source: python/make_single_wall_figures.py
% Numerical checks: python/verify_single_wall.py
{\includegraphics[width=\columnwidth]{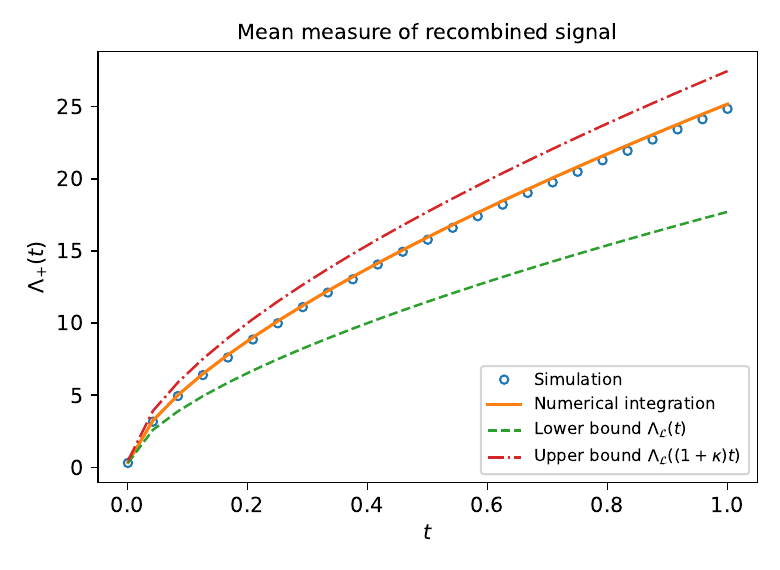}}
\caption{The mean measure $\Measure_{+}(t)$ of the recombined signal process, with the sandwich bounds of Proposition~\ref{p.sandwich}. Parameters: density $\lambda=10$, exponent $\beta=3$, distance $a=0.1$, coefficient $\kappa=1$, unit marks.}
\label{g.combintensity}
\end{figure}

\begin{figure}[t]
\centering
% Figure source: python/make_single_wall_figures.py
% Numerical checks: python/verify_single_wall.py
{\includegraphics[width=\columnwidth]{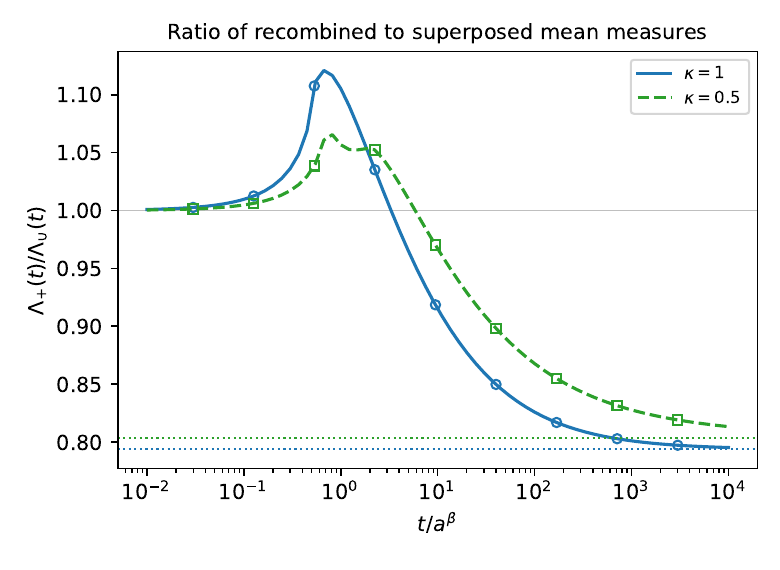}}
\caption{The ratio of the recombined to the superposed mean measures against the rescaled threshold $t/a^{\beta}$, for reflection coefficients $\kappa=1$ and $\kappa=0.5$ and path-loss exponent $\beta=3$. Open markers give independent checks in the original coordinates, using wall distances $a=0.1$ and $a=0.5$ for reflection coefficients $\kappa=1$ and $\kappa=0.5$, respectively. Dotted lines show the limits in expression~\eqref{e.ratio}. Each curve applies to every positive wall distance and crosses one before tending to its limit.}
\label{g.combratio}
\end{figure}

% End Section III.

\section{Poisson transmitter field}\label{s.poisson}
The preceding mean measures, bounds, and asymptotics used only the constant transmitter density, the stated mark assumptions, and Campbell's theorem. We now assume that the transmitter process $\Phi$ is homogeneous Poisson with density $\lambda$. We hold this assumption throughout this section.

\subsection{Mapped processes and strongest received power}\label{s.mapped}
The Poisson assumption now turns the mean measures into full statistical descriptions of the signal processes. Reflection preserves areas, so the spatial image process $\Phi'$ is also homogeneous Poisson with density $\lambda$ on the opposite half-plane $\righthalf$. The two spatial processes are not independent: reflection fixes each image once its transmitter is placed. The following proposition collects the corresponding signal processes and their mean measures.

\begin{Proposition}[Poisson signal processes]\label{p.poisson_signals}
Suppose that the transmitter process $\Phi$ is homogeneous Poisson on the left half-plane $\lefthalf$ with density $\lambda>0$, under the mark and finiteness assumptions of Section~\ref{sec:One_Wall}. The direct propagation process $\Xi$ and reflected propagation process $\Xi'$ are each Poisson on the positive real line $\R^+$. Their respective mean measures $\Measure_{\lefthalf}$ and $\Measure_{\righthalf}$ are given by Propositions~\ref{p.los} and~\ref{p.nlos}.

The recombined propagation process $\Xi_{+}$ is also Poisson. For power-law path loss and general paired marks, its mean measure $\Measure_{+}$ is given by expression~\eqref{e.intenscombdef}. For reflection coefficient $\kappa=0$, the reflected process $\Xi'$ is empty and has mean measure zero, while the recombined process $\Xi_{+}$ equals the direct process $\Xi$.
\end{Proposition}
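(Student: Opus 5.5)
The argument has two ingredients: the marking theorem and the mapping theorem for Poisson processes. Because the marks are i.i.d., or the mark pairs $(\fade_i,\fade_i')$ are i.i.d. across transmitters, and they are independent of $\Phi$, the marking theorem makes $\tilde\Phi=\{(Z_i,\fade_i,\fade_i')\}$ a Poisson process on $\lefthalf\times(0,\infty)^2$. Its intensity measure is $\lambda\,dz\otimes \mu(df,df')$, where $\mu$ is the joint law of the pair. Each of the three signal processes is then the image of this one marked Poisson process under a deterministic measurable map:
\begin{align*}
g(z,f,f') &= \frac{1}{f\,\ell(\lVert z\rVert)}, \\
g'(z,f,f') &= \frac{1}{\kappa f'\,\ell(\hat r_1(z))}, \\
g_{+}(z,f,f') &= \frac{1}{f\ell(\lVert z\rVert)+\kappa f'\ell(\hat r_1(z))}.
\end{align*}
Here $\hat r_1(z)$ is the image distance from expression~\eqref{e.rhat}. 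Writing $g'$ with $\hat r_1(z)$ means we never have to treat $\Phi'$ as a separate Poisson process. It also shows that $\Xi'$ is a function of the same $\tilde\Phi$. This is why the three processes are individually Poisson even though they are mutually dependent.

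By the mapping theorem~\cite{kingman1992poisson}, each image is Poisson on $\R^+$ with intensity measure equal to the push-forward of $\lambda\,dz\otimes\mu$. Two conditions must be checked. \textbf{Local finiteness:} for every $t$, the push-forward must assign finite mass to $(0,t]$. \textbf{Non-atomicity:} the push-forward must have no atoms, so that the image is simple. Kingman's version requires this; otherwise the image is still Poisson in the generalized sense of~\cite{book2018stochastic}. The mass of $(0,t]$ is exactly the Campbell computation already carried out. For $g$ and $g'$ it equals $\Measure_{\lefthalf}(t)$ and $\Measure_{\righthalf}(t)$ from Propositions~\ref{p.los} and~\ref{p.nlos}. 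Each is finite because the clipped-disk area is at most $\pi d^2$ and the finite expected area condition holds for $\fade$ and $\fade'$. For $g_{+}$, the mass is expression~\eqref{e.intenscombdef}. Its finiteness follows from the upper sandwich bound~\eqref{e.sandwich_general} of Proposition~\ref{p.sandwich}: the bound is at most $\pi\lambda t^{2/\beta}\E[(\fade+\kappa\fade')^{2/\beta}]$, and this is finite by the elementary inequality $(x+y)^{p}\leq C_p(x^p+y^p)$ together with the assumed moments.

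For non-atomicity, I would note that each mean measure is a continuous function of $t$. Conditional on the marks, the level sets $\{g=s\}$ are circles or arcs, and the level sets of $g_{+}$ are boundaries of the Cassinian-type region. These curves have zero Lebesgue area, because $r\mapsto f r^{-\beta}+\kappa f'\hat r_1^{-\beta}$ is strictly monotone along rays in the relevant region, or more simply because the level set is a real-analytic curve for $r>0$. Fubini then gives zero mass to every point $\{s\}$. For a general $\ell$ with flat pieces this can fail, so in general I would invoke the Poisson mapping theorem in the form that allows atoms (a Poisson process with possible multiplicities). I expect this to be the main technical obstacle for general $h$. For the power law used in the $\Xi_+$ statement, the argument above is enough.

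The case $\kappa=0$ is direct. By the convention of Section~\ref{s.virtual}, $\Xi'$ is empty with zero mean measure, which is trivially a Poisson process. By the last equality in~\eqref{e.parallel}, $g_{+}=g$ identically, so $\Xi_{+}=\Xi$ pointwise.
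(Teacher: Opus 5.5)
Your proposal is correct and follows the same route as the paper: the marking theorem for the i.i.d.\ mark pairs, then the Poisson mapping theorem applied to the three deterministic maps, with the case $\kappa=0$ read off from the definitions. You go further than the paper's proof by checking local finiteness and non-atomicity of the push-forward measures, which the paper leaves implicit. The only loose end is that your finiteness bound for $\Xi_{+}$ is stated for the power law; the pathwise inequality $f\ell(r)+\kappa f'\ell(\hat r_1)\leq(f+\kappa f')\ell(r)$, together with monotonicity of $h^{-1}$ and the finite expected area conditions, gives the same bound for general $\ell$.
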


\begin{proof}
The independent mark pairs give a Poisson process of marked transmitters. For a positive reflection coefficient $\kappa>0$, each of the three signal models maps one marked transmitter to one inverse received power. The marking and mapping theorems therefore give the three Poisson processes~\cite[Secs.~2.3 and~5.2]{kingman1992poisson}, with the mean measures calculated in Sections~\ref{sec:One_Wall} and~\ref{s.both}. The case of zero reflection follows from the definitions.
\end{proof}

The mapped processes $\Xi$, $\Xi'$, and $\Xi_{+}$ are not generally homogeneous. Although each process is marginally Poisson, the superposed process $\Xi_{\cup}=\Xi\cup\Xi'$ is generally not Poisson. For example, when the wall distance is $a=0$, the reflection coefficient is $\kappa=1$, and common fading is used, each direct point is repeated at its reflected counterpart. Counts in the superposition are then even, rather than Poisson distributed. The mean measures still add according to the identity in expression~\eqref{e.intenssep}.

Proposition~\ref{p.poisson_signals} also gives the distribution of the strongest direct received power. Write its reciprocal, the smallest direct inverse received power, as $L_{\min}\triangleq\min_i L_i$. For a threshold $t\geq0$, the usual void-probability argument gives
\begin{equation}\label{e.minLOS}
\Prob(L_{\min}>t)=\Prob\big(\Xi(t)=0\big)=e^{-\Measure_{\lefthalf}(t)}\,.
\end{equation}
The strongest direct received power is therefore $1/L_{\min}$. Under common fading, it is also the strongest of the paths kept separate, by the pointwise comparison in Section~\ref{s.separate}. Without common fading, a reflected path can instead be strongest; we calculate that probability in Section~\ref{s.strongest_path}. The same void-probability argument applies to the recombined process using its mean measure $\Measure_{+}$.

{\color{revisionred}
\subsection{Paired signal counts}\label{s.counts_common}
We first assume common fading on each transmitter's direct and reflected paths, so the marks satisfy $\fade_i'=\fade_i$. A reflected path can then qualify only if its direct counterpart qualifies. Thus, at a fixed threshold, a qualifying transmitter contributes either one direct path or both paths. These are location-dependent marks of the Poisson transmitter process $\Phi$.

Figure~\ref{g.countsplit} shows the one-path and two-path classes used below.

\begin{figure}[t]
\centering
% Figure source: fig_signal_counts.tex
% TikZ source for Figure 7 (paired signal counts).
% Kept separate from the main article source by design.
% Geometry reduced without resizing node text or symbols.
\begingroup
\definecolor{countBlue}{RGB}{0,114,178}
\definecolor{countOrange}{RGB}{213,94,0}
\begin{tikzpicture}[scale=0.82,x=0.72cm,y=1.00cm,font=\footnotesize,
    draw=black,text=black,line width=0.7pt]
\path[use as bounding box] (0,-4.28) rectangle (11.8,1.20);

% Panel (a): direct path only.
\node[anchor=west] at (0.0,0.78) {\textbf{(a)}\hspace{0.45em}One qualifying path};
\fill[black!5] (0.85,-0.18) rectangle (4.45,0.18);
\draw[->] (0.85,0) -- (11.15,0);
\draw (0.85,-0.08) -- (0.85,0.08);
\node[anchor=north] at (0.85,-0.13) {$0$};
\draw[dashed] (4.45,0.28) -- (4.45,-0.30);
\node[anchor=south] at (4.45,0.33) {$t$};
\filldraw[draw=countBlue,fill=countBlue] (3.05,0) circle[radius=3.2pt];
\filldraw[draw=countOrange,fill=countOrange]
    (6.65,0.14) -- (6.50,-0.11) -- (6.80,-0.11) -- cycle;
\node[anchor=north] at (3.05,-0.13) {$L_i$};
\node[anchor=north] at (6.65,-0.13) {$L_i'$};
\node[anchor=west] at (7.55,0.30) {$L_i\leq t<L_i'$};
\node[anchor=west] at (7.55,-0.40) {direct path only: $N_1$};

% Panel (b): both paths.
\node[anchor=west] at (0.0,-1.45) {\textbf{(b)}\hspace{0.45em}Two qualifying paths};
\fill[black!5] (0.85,-2.38) rectangle (4.45,-2.02);
\draw[->] (0.85,-2.20) -- (11.15,-2.20);
\draw (0.85,-2.28) -- (0.85,-2.12);
\node[anchor=north] at (0.85,-2.33) {$0$};
\draw[dashed] (4.45,-1.92) -- (4.45,-2.50);
\node[anchor=south] at (4.45,-1.87) {$t$};
\filldraw[draw=countBlue,fill=countBlue] (2.75,-2.20) circle[radius=3.2pt];
\filldraw[draw=countOrange,fill=countOrange]
    (3.75,-2.06) -- (3.60,-2.31) -- (3.90,-2.31) -- cycle;
\node[anchor=north] at (2.75,-2.33) {$L_i$};
\node[anchor=north] at (3.75,-2.33) {$L_i'$};
\node[anchor=west] at (7.55,-1.90) {$L_i\leq L_i'\leq t$};
\node[anchor=west] at (7.55,-2.62) {both paths: $N_2$};

% Count identities.
\node[anchor=west] at (0.95,-3.23)
    {$\Xi(t)=N_1+N_2,\qquad \Xi'(t)=N_2,$};
\node[anchor=west] at (3.15,-3.88) {$\Xi_{\cup}(t)=N_1+2N_2.$};
% Axis quantity is already identified by the caption.
\end{tikzpicture}
\endgroup
\caption{Signal counting under common fading at a fixed threshold $t$. The shaded intervals mark the interval $(0,t]$. A qualifying transmitter contributes either one direct path or two paths. A reflected-only case cannot occur because the reflected inverse power $L_i'$ is at least the direct inverse power $L_i$. These two cases generate the count decomposition shown below the panels.}
\label{g.countsplit}
\end{figure}

For a fixed threshold $t>0$ on inverse received power, the direct count $\Xi(t)$, reflected count $\Xi'(t)$, superposed count $\Xi_{\cup}(t)$, and recombined count $\Xi_{+}(t)$ record inverse received powers in the interval $(0,t]$. Figure~\ref{g.path_type_probs} illustrates the marking probabilities under unit, common, and independent fading; only independent fading permits reflected-only paths.

\begin{figure}[t]
\centering
\includegraphics[width=\columnwidth]{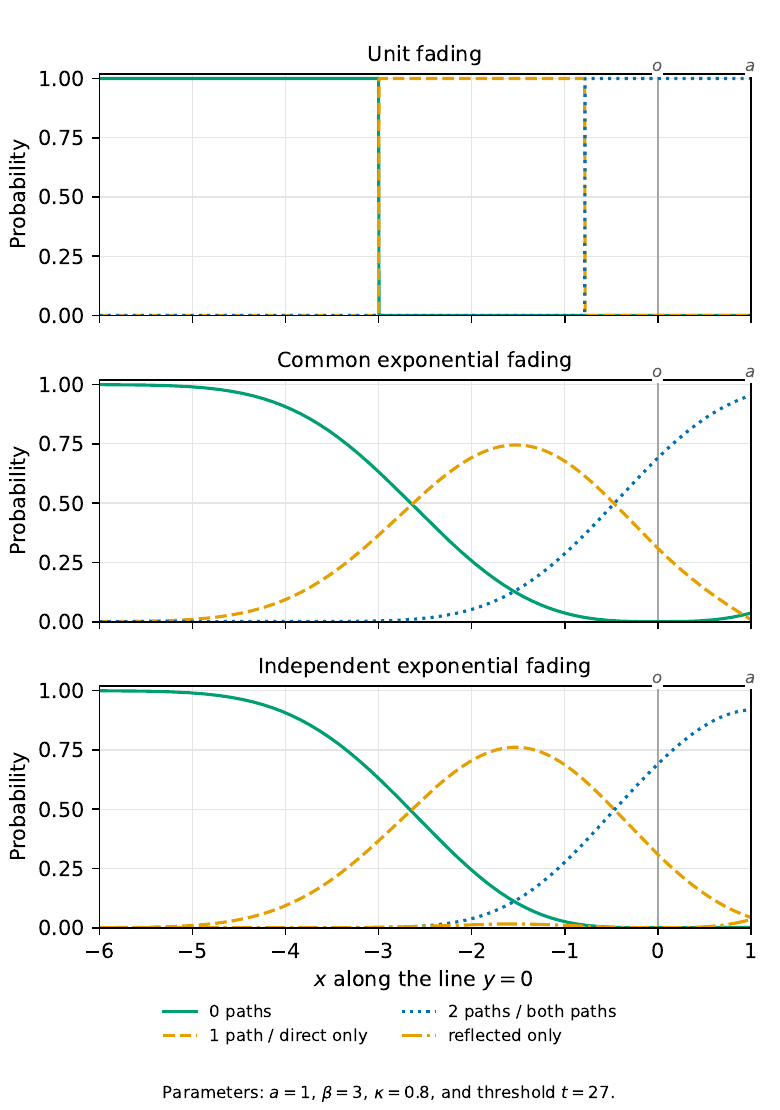}
\caption{Path-type probabilities along the line $y=0$. Top: unit fading. Middle: common exponential fading, with shared mark $F_i'=F_i$. Bottom: independent exponential fading, which permits reflected-only paths. Vertical guides mark the receiver $o$ and wall position $a$. Parameters: wall distance $a=1$, exponent $\beta=3$, coefficient $\kappa=0.8$, and threshold $t=27$.}
\label{g.path_type_probs}
\end{figure}

Under common fading, the next proposition turns the two classes in Figure~\ref{g.countsplit} into independent Poisson counts.

\begin{Proposition}[Paired signal counts]\label{p.paired_counts}
Assume that the transmitter process $\Phi$ is homogeneous Poisson with density $\lambda>0$, that the paths have common fading $\fade_i'=\fade_i$, and that the path-loss, mark, and finiteness conditions of Sections~\ref{sec:One_Wall} and~\ref{s.both} hold. Fix a threshold $t>0$. Let the count $N_1(t)$ record transmitters contributing one qualifying path, and let the count $N_2(t)$ record transmitters contributing two qualifying paths. These counts are independent. The one-path count has the distribution
\begin{equation}\label{e.count_one_path}
N_1(t)\sim\operatorname{Poisson}\!\left(\Measure_{\lefthalf}(t)-\Measure_{\righthalf}(t)\right).
\end{equation}
The two-path count has the distribution
\begin{equation}\label{e.count_two_path}
N_2(t)\sim\operatorname{Poisson}\!\left(\Measure_{\righthalf}(t)\right).
\end{equation}
The direct count satisfies the identity
\begin{equation}\label{e.count_direct_decomposition}
\Xi(t)=N_1(t)+N_2(t).
\end{equation}
The reflected count satisfies the identity
\begin{equation}\label{e.count_reflected_decomposition}
\Xi'(t)=N_2(t).
\end{equation}
The superposed count satisfies the identity
\begin{equation}\label{e.count_union_decomposition}
\Xi_{\cup}(t)=N_1(t)+2N_2(t).
\end{equation}
Consequently, the covariance of the direct and reflected counts is
\begin{equation}\label{e.count_covariance}
\operatorname{Cov}\!\left(\Xi(t),\Xi'(t)\right)=\Measure_{\righthalf}(t).
\end{equation}
The variance of the superposed count is
\begin{equation}\label{e.count_variance}
\Var\!\left(\Xi_{\cup}(t)\right)=\Measure_{\cup}(t)+2\Measure_{\righthalf}(t).
\end{equation}
\end{Proposition}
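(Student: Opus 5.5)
The plan is to treat the classification of each transmitter as an independent location-dependent marking of the Poisson process $\Phi$ and then apply the marking (colouring) theorem. Fix $t>0$. Under common fading, each marked transmitter $(Z_i,\fade_i)$ falls into exactly one of three classes: none of its paths qualifies ($L_i>t$); only the direct path qualifies ($L_i\leq t<L_i'$); or both qualify ($L_i'\leq t$). The class is a deterministic function of $(Z_i,\fade_i)$. By inequality~\eqref{e.path_power_order}, $L_i'\geq L_i$, so the event $\{L_i'\leq t\}$ is contained in $\{L_i\leq t\}$ and no fourth class (reflected only) exists. When $\kappa=0$ we use the convention that $\Xi'$ is empty, so the two-path class is empty.

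Because the marks are i.i.d. and independent of $\Phi$, the marked process $\tilde\Phi$ is Poisson on $\lefthalf\times(0,\infty)$. Partitioning its points by class is a restriction to disjoint measurable sets, so the restricted processes are independent Poisson processes~\cite[Secs.~2.3 and~5.2]{kingman1992poisson}. Hence $N_1(t)$ and $N_2(t)$ are independent and Poisson distributed. Their means come from Campbell's theorem:
\[
\E[N_2(t)]=\lambda\int_{\lefthalf}\Prob(L'\leq t)\,dz=\Measure_{\righthalf}(t),
\]
by Proposition~\ref{p.nlos} and the area preservation of reflection. Also $\E[N_1(t)]=\lambda\int_{\lefthalf}\big[\Prob(L\leq t)-\Prob(L'\leq t)\big]\,dz=\Measure_{\lefthalf}(t)-\Measure_{\righthalf}(t)$, using the containment of events to write the probability of the difference as a difference of probabilities. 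Both means are finite by the finite expected area condition.

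The identities follow directly from the counting: $\Xi(t)=N_1+N_2$ and $\Xi'(t)=N_2$, and the superposition is a multiset, so $\Xi_{\cup}(t)=N_1+2N_2$. For the moments, independence gives
\[
\operatorname{Cov}(\Xi(t),\Xi'(t))=\operatorname{Cov}(N_1+N_2,N_2)=\Var(N_2)=\Measure_{\righthalf}(t).
\]
Similarly, $\Var(\Xi_{\cup}(t))=\Var(N_1)+4\Var(N_2)=\Measure_{\lefthalf}-\Measure_{\righthalf}+4\Measure_{\righthalf}$, which equals $\Measure_{\cup}(t)+2\Measure_{\righthalf}(t)$ by~\eqref{e.intenssep}.

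The main point needing care is the independence claim. Independence holds because the class depends only on each transmitter's own location and mark, so this is an independent thinning of $\tilde\Phi$ into disjoint regions of mark space, not a dependent operation. The other step to check is the pathwise ordering $L_i'\geq L_i$, which needs common fading, $\kappa\leq1$ and the bound $\hat r_1\geq r$ from~\eqref{e.rhatgeq}. This ordering is what makes the mean of $N_1$ a difference of the direct and reflected mean measures.
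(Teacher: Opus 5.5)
Your proposal is correct and follows essentially the same route as the paper. The path ordering under common fading rules out reflected-only transmitters, the zero-, one- and two-path classes are disjoint sets of the marked Poisson transmitter process and so give independent Poisson counts with means from Campbell's theorem, and the covariance and variance follow from the count identities; you also spell out the difference-of-probabilities step for $\E[N_1(t)]$ and the computation $\Var(N_1)+4\Var(N_2)$, which the paper leaves implicit.
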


\begin{proof}
Under common fading, the path order inequality~\eqref{e.path_power_order} makes reflected qualification a subset of direct qualification. The expression~\eqref{e.mean_reflected_general} for the reflected mean measure gives the mean $\Measure_{\righthalf}(t)$ of the two-path count $N_2(t)$. The expression~\eqref{e.mean_direct_general} for the direct mean measure gives the mean $\Measure_{\lefthalf}(t)$, so subtracting the two-path mean gives the one-path mean in expression~\eqref{e.count_one_path}. The zero-path, one-path, and two-path classes partition the marked Poisson transmitter process $\Phi$. The location-dependent marking theorem makes these classes independent Poisson processes~\cite[Sec.~5.2, pp.~55--56]{kingman1992poisson}, which proves the distributions in expressions~\eqref{e.count_one_path} and~\eqref{e.count_two_path}. The count identities in expressions~\eqref{e.count_direct_decomposition}--\eqref{e.count_union_decomposition} then give the covariance and variance.
\end{proof}
\FloatBarrier

Thus common fading adds the excess variance $2\Measure_{\righthalf}(t)$ above a Poisson count with the same mean. Since every qualifying reflected path has a qualifying direct counterpart, $\Measure_{\righthalf}(t)\leq\Measure_{\lefthalf}(t)$, so the variance-to-mean ratio lies between one and two. The lower value is attained for reflection coefficient $\kappa=0$; the upper value is attained when a lossless wall passes through the receiver.

For general positive paired marks, the same marking argument still gives independent one-path and two-path counts, although a one-path transmitter may now contribute either path. Let the mean number of transmitters for which both paths qualify be $m_2(t)$. Then
$\Var(\Xi_{\cup}(t))=\Measure_{\cup}(t)+2m_2(t)\leq2\Measure_{\cup}(t)$.
This bound is enough to connect the mean measure asymptotics to observed counts. As the threshold grows, each count divided by its mean becomes arbitrarily close to one with probability tending to one. We denote this convergence in probability by $\xrightarrow{\mathsf P}$.

\begin{Corollary}[Observed count ratio]\label{c.observed_ratio}
Under the assumptions of Proposition~\ref{p.asympt}, suppose further that the transmitter process $\Phi$ is homogeneous Poisson. Then
\begin{equation}\label{e.observed_ratio}
\frac{\Xi_{+}(t)}{\Xi_{\cup}(t)}
\xrightarrow{\mathsf P}
\frac{\E[(\fade+\kappa\fade')^{2/\beta}]}
{\E[\fade^{2/\beta}]+\kappa^{2/\beta}\E[(\fade')^{2/\beta}]},
\qquad t\to\infty.
\end{equation}
Set the ratio to zero when the denominator is zero. Under common fading, the limit reduces to the universal constant in expression~\eqref{e.ratio}.
\end{Corollary}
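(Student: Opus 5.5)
The plan is to show that each count, divided by its own mean, tends to one in probability, and then to let Proposition~\ref{p.asympt} supply the limit of the ratio of the means. Write
\[
\frac{\Xi_{+}(t)}{\Xi_{\cup}(t)}
=\frac{\Xi_{+}(t)/\Measure_{+}(t)}{\Xi_{\cup}(t)/\Measure_{\cup}(t)}\cdot\frac{\Measure_{+}(t)}{\Measure_{\cup}(t)} .
\]
The last factor converges to the stated constant by expression~\eqref{e.ratio_general_marks}. It therefore suffices to prove that $\Xi_{+}(t)/\Measure_{+}(t)\xrightarrow{\mathsf P}1$ and $\Xi_{\cup}(t)/\Measure_{\cup}(t)\xrightarrow{\mathsf P}1$. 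The continuous mapping theorem then gives the ratio, because the denominator limit is one. On the event $\Xi_{\cup}(t)=0$ the ratio is set to zero, but that event has vanishing probability, so the convention is harmless.

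The first step is to check that both means diverge. Proposition~\ref{p.asympt} gives $\Measure_{+}(t)\sim\frac{\pi\lambda}{2}\E[(\fade+\kappa\fade')^{2/\beta}]\,t^{2/\beta}$. The coefficient is positive because the marks are positive, and finite because $(f+\kappa f')^{2/\beta}\le C(f^{2/\beta}+(f')^{2/\beta})$ with $C=\max(1,2^{2/\beta-1})$. By the same asymptotics applied to Propositions~\ref{p.los} and~\ref{p.nlos}, $\Measure_{\cup}(t)$ grows like a positive constant times $t^{2/\beta}$; positivity comes from the direct term alone. Hence both means tend to infinity. The same observation shows the limiting constant is finite and positive, so the denominator in expression~\eqref{e.observed_ratio} is not zero.

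The second step handles the two counts separately, using Chebyshev's inequality in each case.

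For the recombined count, Proposition~\ref{p.poisson_signals} says $\Xi_{+}$ is Poisson, so $\Var(\Xi_{+}(t))=\Measure_{+}(t)$. Chebyshev then gives
\[
\Prob\big(|\Xi_{+}(t)/\Measure_{+}(t)-1|>\delta\big)\le \frac{1}{\delta^2\Measure_{+}(t)}\to0 .
\]

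For the superposed count, $\Xi_{\cup}$ is not Poisson. Here I would use the bound stated just before the corollary, $\Var(\Xi_{\cup}(t))\le2\Measure_{\cup}(t)$, which holds for general paired marks. It comes from splitting the transmitters into independent zero-, one- and two-path classes by the marking theorem, so that $\Xi_{\cup}(t)=N_1+2N_2$ with $N_1,N_2$ independent Poisson. Then $\Var=m_1+4m_2$, where the mean satisfies $\Measure_{\cup}=m_1+2m_2$, and so $\Var=\Measure_{\cup}+2m_2\le2\Measure_{\cup}$. Chebyshev now gives the bound $2/(\delta^2\Measure_{\cup}(t))\to0$.

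The step I expect to need the most care is justifying this variance bound rigorously for general, possibly dependent, paired marks. The one-path and two-path classes are defined through the joint mark pair together with the location, so I would invoke the location-dependent marking theorem from Kingman's Section~5.2, as in the proof of Proposition~\ref{p.paired_counts}. I would also confirm $m_2(t)<\infty$, which follows from $m_2\le\Measure_{\righthalf}(t)<\infty$ under the finite-moment assumptions. Finally, under common fading the limit specializes to expression~\eqref{e.ratio} exactly as in Remark~\ref{r.ratio}.
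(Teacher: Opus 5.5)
Your proposal is correct and follows the paper's proof essentially step for step. You apply Chebyshev's inequality to each count normalised by its mean, using the Poisson variance $\Measure_{+}(t)$ for the recombined count and the marking bound $\Var(\Xi_{\cup}(t))\leq 2\Measure_{\cup}(t)$ for the superposed count, then combine with the mean-measure ratio limit of Proposition~\ref{p.asympt}; your extra checks (finite, positive limiting coefficients, $m_2(t)\leq\Measure_{\righthalf}(t)<\infty$, and vanishing probability of $\Xi_{\cup}(t)=0$) fill in details the paper leaves implicit.
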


\begin{proof}
The recombined count $\Xi_{+}(t)$ is Poisson, so its variance equals $\Measure_{+}(t)$. The general marking argument above gives $\Var(\Xi_{\cup}(t))\leq2\Measure_{\cup}(t)$ for the superposed count $\Xi_{\cup}(t)$. Both mean counts grow in proportion to $t^{2/\beta}$, so Chebyshev's inequality makes each count divided by its mean converge in probability to one. Combining these limits with expression~\eqref{e.ratio_general_marks} proves the convergence in expression~\eqref{e.observed_ratio}; the probability that $\Xi_{\cup}(t)=0$ tends to zero.
\end{proof}

\subsection{The strongest individual path}\label{s.strongest_path}
Under common fading, the mark equality $\fade_i'=\fade_i$ makes the direct path of every transmitter at least as strong as its reflected path. With general paired marks, the strongest individual path in the network may instead be reflected. The aggregate received power is a standard shot noise, formed by adding transmitter contributions. Here we instead consider the largest individual power $M=\sup_i\max(1/L_i,1/L_i')$, an example of extremal shot noise that takes their maximum~\cite[Vol.~I, Sec.~2.4]{BB09}. We also identify whether that maximum comes from a direct or reflected path, rather than finding the strongest \emph{recombined} transmitter.

For a transmitter at location $z=(x,y)$, let the direct and reflected inverse powers, using a generic mark pair $(\fade,\fade')$, be
\begin{equation}\label{e.strongest_path_pair}
L(z)\triangleq\frac{r^\beta}{\fade},\qquad
L'(z)\triangleq\frac{\hat r_1^\beta}{\kappa\fade'}.
\end{equation}
For reflection coefficient $\kappa=0$, take the reflected inverse power to be $L'(z)=\infty$. We assign a path to the transmitter according to the smaller of the inverse powers $L(z)$ and $L'(z)$; a tie is assigned to the direct path. For a threshold $t>0$, define the intensity measure $H(t)$ on the interval $(0,t]$ for the Poisson process formed by retaining the stronger path from each transmitter. Define the direct-winner intensity measure $H_D(t)$ on the same interval. Thus the total intensity measure is $H$, while the subscript $D$ selects its direct component; the reflected-winner component has intensity measure $H(t)-H_D(t)$. The two intensity measures $H(t)$ and $H_D(t)$ are given by the integrals
\begin{align}
H(t)&\triangleq\lambda\int_{\lefthalf}
  \Prob\bigl(\min(L(z),L'(z))\leq t\bigr)\,dz,
  \label{e.strongest_path_mean}\\
H_D(t)&\triangleq\lambda\int_{\lefthalf}
  \Prob\bigl(L(z)\leq t,\ L(z)\leq L'(z)\bigr)\,dz.
  \label{e.strongest_path_direct_mean}
\end{align}

When the wall passes through the receiver, the spatial integrals in expressions~\eqref{e.strongest_path_mean} and~\eqref{e.strongest_path_direct_mean} again reduce to half-disk areas. The next lemma evaluates them using the marked projection process calculation behind propagation invariance and network equivalence~\cite[Sec.~4.2.5]{book2018stochastic} and~\cite{netequivalence}.

\begin{Lemma}[Intensity measures for the stronger path at the wall]\label{l.strongest_path_measures}
Assume that the wall distance is $a=0$. The total intensity measure for the stronger path is
\begin{equation}\label{e.strongest_path_mean_at_wall}
H(t)=\frac{\pi\lambda}{2}t^{2/\beta}
\E\!\left[\max(\fade,\kappa\fade')^{2/\beta}\right].
\end{equation}
The intensity measure for direct winners is
\begin{equation}\label{e.strongest_path_direct_mean_at_wall}
H_D(t)=\frac{\pi\lambda}{2}t^{2/\beta}
\E\!\left[\fade^{2/\beta}\Ind(\fade\geq\kappa\fade')\right].
\end{equation}
\end{Lemma}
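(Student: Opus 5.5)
With the wall at $a=0$, the reflection distance collapses to the direct one, so both integrals in \eqref{e.strongest_path_mean} and \eqref{e.strongest_path_direct_mean} become areas of half-disks with random radii. Setting $a=0$ in \eqref{e.rhat} gives $\hat r_1^2=x^2+y^2=r^2$, so $\hat r_1=r$ for every $z\in\lefthalf$, which is now the closed half-plane $\{x\le0\}$. For $\kappa>0$, the pair in \eqref{e.strongest_path_pair} then reads $L(z)=r^\beta/\fade$ and $L'(z)=r^\beta/(\kappa\fade')$. Hence
\[
\min(L(z),L'(z))=\frac{r^\beta}{\max(\fade,\kappa\fade')}.
\]
The tie-breaking event $\{L(z)\le L'(z)\}$ is exactly $\{\fade\ge\kappa\fade'\}$, which does not depend on $z$. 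For $\kappa=0$ we have $L'=\infty$, so $\max(\fade,\kappa\fade')=\fade$ and the indicator is identically one. The same formulas therefore cover this case.

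For $H(t)$, I will exchange probability and spatial integral by Tonelli, since all integrands are nonnegative. This gives
\[
H(t)=\lambda\,\E\Bigl[\bigl|\{z\in\lefthalf:\ r\le (t\max(\fade,\kappa\fade'))^{1/\beta}\}\bigr|\Bigr].
\]
With $a=0$, the set is a half-disk of radius $\rho=(t\max(\fade,\kappa\fade'))^{1/\beta}$, with area $\pi\rho^2/2$. This also follows from \eqref{e.areadiskleft} and \eqref{e.areadisktruc} with $a=0$, since $\arccos 0=\pi/2$. The area is therefore $\frac{\pi}{2}t^{2/\beta}\max(\fade,\kappa\fade')^{2/\beta}$, and taking expectations gives \eqref{e.strongest_path_mean_at_wall}.

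For $H_D(t)$, I will again use Tonelli and write the integrand as $\Prob(r^\beta\le t\fade,\ \fade\ge\kappa\fade')$. Conditioning on the mark pair, the spatial set is the half-disk of radius $(t\fade)^{1/\beta}$ when $\fade\ge\kappa\fade'$, and empty otherwise. Its area is $\frac{\pi}{2}t^{2/\beta}\fade^{2/\beta}\Ind(\fade\ge\kappa\fade')$, which yields \eqref{e.strongest_path_direct_mean_at_wall}.

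The remaining point is finiteness, which justifies calling these Poisson intensity measures. We have $\max(\fade,\kappa\fade')^{2/\beta}\le \fade^{2/\beta}+\kappa^{2/\beta}(\fade')^{2/\beta}$, and the right-hand side has finite expectation by the finite expected area conditions. The only subtle step is the observation that $\hat r_1=r$ makes the winner event independent of location. Once that is in place, everything reduces to half-disk areas, as in Proposition~\ref{p.los}.
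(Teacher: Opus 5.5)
Your proposal is correct and follows essentially the same argument as the paper: at $a=0$ the image and direct distances coincide, the winner event $\{\fade\geq\kappa\fade'\}$ no longer depends on location, and both spatial integrals reduce to half-disk areas that are then averaged over the mark pair. Your additional steps are sound refinements of the paper's brief proof: the Tonelli exchange, the separate check of $\kappa=0$, and the bound $\max(\fade,\kappa\fade')^{2/\beta}\leq\fade^{2/\beta}+\kappa^{2/\beta}(\fade')^{2/\beta}$ that makes $H(t)$ finite.
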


\begin{proof}
At wall distance $a=0$, the two path lengths from a transmitter equal its distance $r$ from the receiver. For fixed marks, the spatial integral in expression~\eqref{e.strongest_path_mean} selects a half-disk of radius $(t\max(\fade,\kappa\fade'))^{1/\beta}$. The spatial integral in expression~\eqref{e.strongest_path_direct_mean} selects a half-disk of radius $(t\fade)^{1/\beta}$ on the event $\fade\geq\kappa\fade'$. Multiplying the two half-disk areas by the transmitter density $\lambda$ and averaging over the mark pair gives the two intensity measures.
\end{proof}

These measures now give the probability that the winning path is direct and the distribution of the strongest power.

\begin{Proposition}[Strongest individual path]\label{p.strongest_path}
Assume that the transmitter process $\Phi$ is homogeneous Poisson with density $\lambda>0$, path-loss exponent $\beta>0$, and positive mark pairs satisfying the assumptions of Proposition~\ref{p.asympt}. The probability $p_D$ that the strongest individual path is direct, resolving exact ties in favour of the direct path, is
\begin{equation}\label{e.strongest_path_probability}
p_D=\int_0^\infty e^{-H(t)}H_D'(t)\,dt.
\end{equation}
If the wall passes through the receiver, $a=0$, the probability simplifies to
\begin{equation}\label{e.strongest_path_at_wall}
p_D=\frac{\E\!\left[\fade^{2/\beta}\,\Ind(\fade\geq\kappa\fade')\right]}
{\E\!\left[\max(\fade,\kappa\fade')^{2/\beta}\right]}.
\end{equation}
At wall distance $a=0$, the strongest individual received power $M$ has, for a received power level $s>0$, the Fr\'echet distribution
\begin{equation}\label{e.strongest_path_frechet}
\Prob(M\leq s)=\exp\!\left[-\frac{\pi\lambda}{2}
\E\!\left[\max(\fade,\kappa\fade')^{2/\beta}\right]s^{-2/\beta}\right].
\end{equation}
\end{Proposition}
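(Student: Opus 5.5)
We map each marked transmitter $(Z_i,\fade_i,\fade_i')$ to a single point: the pair consisting of $\min(L_i,L_i')$ and a type label, direct or reflected, with ties labelled direct. By the marking and mapping theorems already used in Proposition~\ref{p.poisson_signals}, the resulting points form a Poisson process on $\R^+$. Splitting by label, the direct winners and the reflected winners are independent Poisson processes, with intensity measures $H_D$ and $H-H_D$ from expressions~\eqref{e.strongest_path_mean} and~\eqref{e.strongest_path_direct_mean}.

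The strongest individual path corresponds to the smallest point $T$ of the merged process. Its type is the label of that point. The event that the strongest path is direct is therefore the event that the minimum of the direct-winner process lies below the minimum of the reflected-winner process.

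Conditioning on the position of the first direct-winner point gives
\begin{equation*}
p_D=\int_0^\infty \Prob(\text{no reflected winner in }(0,t])\,e^{-H_D(t)}\,dH_D(t)=\int_0^\infty e^{-H(t)}\,dH_D(t).
\end{equation*}
This requires $H$ and $H_D$ to be finite for every $t$. That holds because the integrands are bounded by the direct and reflected threshold probabilities, whose integrals are finite by the moment assumption $\E[\fade^{2/\beta}]+\E[(\fade')^{2/\beta}]<\infty$. We also need $H(t)\to\infty$, so that a strongest path exists almost surely. This follows from $H\geq\Measure_{\lefthalf}$, which grows like a multiple of $t^{2/\beta}$.

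The main technical point is writing $dH_D$ as $H_D'(t)\,dt$. This needs $H_D$ to be absolutely continuous, and it is also where atoms (ties) could cause trouble. For absolute continuity, I would condition on the marks. For fixed $(f,f')$, the set of locations $z$ with $L(z)\leq t$ and $L\leq L'$ is the clipped disk of radius $(tf)^{1/\beta}$ intersected with the fixed set $\{z: r^\beta/f\leq \hat r_1^\beta/(\kappa f')\}$. Its area is absolutely continuous in $t$, since its derivative is bounded by the derivative of the disk area. Fubini then transfers this to the average over marks. The same conditioning shows $H$ is continuous, so ties between distinct transmitters occur with probability zero. Ties within a single transmitter are already assigned to the direct path by the labelling. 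The integral then takes the form in expression~\eqref{e.strongest_path_probability}.

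At $a=0$, Lemma~\ref{l.strongest_path_measures} gives $H(t)=c\,t^{2/\beta}$ and $H_D(t)=c_D\,t^{2/\beta}$, where $c=\frac{\pi\lambda}{2}\E[\max(\fade,\kappa\fade')^{2/\beta}]$ and $c_D=\frac{\pi\lambda}{2}\E[\fade^{2/\beta}\Ind(\fade\geq\kappa\fade')]$. Then $H_D=(c_D/c)H$, so
\begin{equation*}
p_D=\frac{c_D}{c}\int_0^\infty e^{-H}\,dH=\frac{c_D}{c},
\end{equation*}
which is expression~\eqref{e.strongest_path_at_wall}.

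Finally, $M=1/T$. The void probability gives $\Prob(M\leq s)=\Prob(T\geq 1/s)=e^{-H(1/s)}$. The boundary $\{T=1/s\}$ has probability zero by continuity of $H$. Substituting $H(1/s)=c\,s^{-2/\beta}$ yields the Fr\'echet law in expression~\eqref{e.strongest_path_frechet}.
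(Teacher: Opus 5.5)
Your proposal is correct and follows essentially the same route as the paper: you map each transmitter to its stronger path with a direct or reflected label, split into independent direct-winner and reflected-winner Poisson processes, integrate the void probability $e^{-H(t)}$ against the direct-winner increment, and then substitute the half-disk measures of Lemma~\ref{l.strongest_path_measures} at $a=0$, using the void probability again for the Fr\'echet law. Conditioning on the first direct-winner point is the same computation as the paper's, and your checks (finiteness of $H$, divergence $H(t)\to\infty$, absolute continuity of $H_D$, and probability zero for ties between distinct transmitters) make rigorous the steps that the paper's proof leaves implicit.
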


\begin{proof}
Keeping the stronger path from each independently marked Poisson transmitter maps the marked transmitter process $\Phi$ to a Poisson point process of inverse received powers. Marking each mapped point by its winning path splits this process into independent direct-winner and reflected-winner Poisson subprocesses, by the marking theorem~\cite[Sec.~5.2]{kingman1992poisson}. Their total intensity measure on the interval $(0,t]$ is $H(t)$, while the direct-winner intensity measure is $H_D(t)$. The Poisson void probability for the interval $(0,t]$ is $e^{-H(t)}$. Multiplying this probability by the direct-winner intensity increment $H_D'(t)\,dt$ over a short interval and integrating gives the probability in expression~\eqref{e.strongest_path_probability}.

At wall distance $a=0$, substitution of the intensity measures in expressions~\eqref{e.strongest_path_mean_at_wall} and~\eqref{e.strongest_path_direct_mean_at_wall} into expression~\eqref{e.strongest_path_probability} gives the direct-winner probability in expression~\eqref{e.strongest_path_at_wall}. The same Poisson void probability gives the Fr\'echet distribution in expression~\eqref{e.strongest_path_frechet}.
\end{proof}

The Fr\'echet form is the half-plane analogue of the maximum received signal strength distribution for a full-plane Poisson transmitter field~\cite{nguyen2010best}; here the direct and reflected powers remain paired, and we also identify which path supplies the maximum. The exponential factor in the distribution in expression~\eqref{e.strongest_path_frechet} comes from the Poisson void probability, not from an assumption of exponential fading for the marks $\fade_i$ or $\fade_i'$. Independent exponential marks give the following closed form.

\refstepcounter{Theorem}
\noindent\textbf{Corollary~\theTheorem{}\hspace{0.25em}(Independent exponential fading).}\label{c.strongest_path_exponential}
\textit{Assume Proposition~\ref{p.strongest_path} with wall distance $a=0$ and independent unit-mean exponential marks $\fade,\fade'$. For a positive reflection coefficient $\kappa>0$, define the quantity $q=\kappa/(1+\kappa)$. Then the direct-winner probability $p_D$ is}
\begin{equation}\label{e.strongest_path_exponential}
p_D=\frac{1-q^{1+2/\beta}}
{1-q^{1+2/\beta}+\kappa^{2/\beta}
 [1-(1-q)^{1+2/\beta}]}.
\end{equation}
\textit{For a zero reflection coefficient $\kappa=0$, the probability is $p_D=1$.}

\noindent\textit{Proof.} For a positive reflection coefficient $\kappa>0$, conditioning on the direct mark $\fade$ gives the direct-winner moment
\begin{align}
&\E\!\left[\fade^{2/\beta}\Ind(\fade\geq\kappa\fade')\right] \nonumber\\
&\quad=\int_0^\infty x^{2/\beta}e^{-x}\bigl(1-e^{-x/\kappa}\bigr)\,dx \nonumber\\
&\quad=\Gamma(1+2/\beta)\bigl[1-q^{1+2/\beta}\bigr].
\label{e.direct_winner_exponential_moment}
\end{align}
Conditioning instead on the reflected mark $\fade'$ gives the reflected-winner moment
\begin{align}
&\E\!\left[(\kappa\fade')^{2/\beta}\Ind(\kappa\fade'>\fade)\right] \nonumber\\
&\quad=\kappa^{2/\beta}\int_0^\infty x^{2/\beta}e^{-x}\bigl(1-e^{-\kappa x}\bigr)\,dx \nonumber\\
&\quad=\Gamma(1+2/\beta)\kappa^{2/\beta}
   \bigl[1-(1-q)^{1+2/\beta}\bigr].
\label{e.reflected_winner_exponential_moment}
\end{align}
The denominator in expression~\eqref{e.strongest_path_at_wall} is the sum of the moments in expressions~\eqref{e.direct_winner_exponential_moment} and~\eqref{e.reflected_winner_exponential_moment}. Their common Gamma factor cancels, proving expression~\eqref{e.strongest_path_exponential}. The case of zero reflection follows by continuity.

For path-loss exponent $\beta=3$ and reflection coefficient $\kappa=0.8$, the probability is $p_D\approx0.579$ at wall distance $a=0$, compared with one under common fading; with the same parameters and marks and dimensionless product $\lambda a^2=1$, numerical evaluation of expression~\eqref{e.strongest_path_probability} gives $p_D\approx0.991$. Thus positive wall distance strongly favours the direct path despite independent fading.
}

\section{Conclusion}\label{sec:conclusion}
A fixed reflecting wall pairs every real transmitter with a mirror image. Keeping the two received powers apart superposes their signal processes, while adding them transmitter by transmitter produces a recombined process. The corresponding mean measures follow from threshold regions in the plane, without assuming Poisson transmitters. For power-law path loss and general paired fading marks, the ratio of recombined to superposed mean measures at large thresholds is independent of transmitter density and wall distance, but generally depends on the joint mark distribution. Under common fading, this dependence disappears, leaving a constant determined only by the reflection coefficient and path-loss exponent.

For Poisson transmitters, location-dependent marking gives paired count fluctuations and convergence of the observed ratio. Common fading creates excess variance in the superposed count. For general paired marks, a reflected path can be strongest, so we calculate the probability that the winning path is direct. When the wall passes through the receiver, this probability has a closed form and the strongest received power has an exact Fr\'echet distribution. This fixed specular wall does not model programmable surfaces, but it isolates the dependence created by mirror geometry and provides a reference case for reflected wireless signals.

\appendix[Proof of Proposition~\ref{p.asympt}]\label{app:asympt_proof}
{\color{revisionred}First take unit marks $\fade=\fade'=1$. Rescale transmitter location $z=(x,y)$ to coordinates $(u,v)=t^{-1/\beta}(x,y)$ and wall distance $a$ to rescaled distance $\epsilon=a t^{-1/\beta}$. The wall becomes the line $u=\epsilon$, and transmitter coordinate $u\leq\epsilon$, as shown in Figure~\ref{g.thresholdregions}(b). Write the rescaled direct distance as $\bar r(u,v)\triangleq\sqrt{u^2+v^2}$ and the rescaled image distance as $\bar r_1(u,v;\epsilon)\triangleq\sqrt{(2\epsilon-u)^2+v^2}$. These are the scaled distances $t^{-1/\beta}r$ and $t^{-1/\beta}\hat r_1$. The semicolon separates the spatial variables from the rescaled wall distance. We suppress the arguments below when no ambiguity arises. Define the threshold region by
\begin{equation}\label{e.Depsilon}
\begin{aligned}
D_\epsilon\triangleq\{(u,v)\in\R^2:\;&u\leq\epsilon,\\
&\bar r(u,v)^{-\beta}\\
&\quad+\kappa\bar r_1(u,v;\epsilon)^{-\beta}\geq1\}.
\end{aligned}
\end{equation}
The area $|D_\epsilon|$ gives
$\Measure_{+}(t)=\lambda t^{2/\beta}|D_\epsilon|$.

When the rescaled wall distance is zero, $\epsilon=0$, the rescaled direct and image distances coincide:
\begin{equation}\label{e.limit_image_distance}
\bar r_1(u,v;0)=\bar r(u,v).
\end{equation}
Hence the defining inequality in expression~\eqref{e.Depsilon} becomes
\begin{equation}\label{e.limit_region_inequality}
(1+\kappa)\bar r(u,v)^{-\beta}\geq1,
\end{equation}
which is equivalent to
\begin{equation}\label{e.limit_region_disk}
u^2+v^2\leq(1+\kappa)^{2/\beta},
\end{equation}
which defines a disk. Together with the half-plane condition $u\leq0$, this shows that $D_0$ is exactly the left half of a disk with radius
$\rho=(1+\kappa)^{1/\beta}$. Here the value $\epsilon=0$ describes the limiting rescaled geometry, not necessarily a wall at physical distance $a=0$. For every fixed wall distance $a$, the rescaled distance $\epsilon=a t^{-1/\beta}$ tends to zero as $t\to\infty$.

For positive rescaled wall distance $\epsilon>0$ and transmitter coordinate $u\leq\epsilon$, the two rescaled distances satisfy
\begin{equation}\label{e.rescaled_distance_bounds}
\bar r\leq\bar r_1\leq\bar r+2\epsilon.
\end{equation}
Indeed,
\begin{equation}\label{e.rescaled_distance_identity}
\bar r_1^2-\bar r^2
=4\epsilon(\epsilon-u)\geq0,
\end{equation}
and the upper bound follows from the triangle inequality. 
If point $(u,v)$ lies in region $D_\epsilon$, then
\begin{equation}\label{e.rescaled_region_bound}
1\leq \bar r^{-\beta}+\kappa\bar r_1^{-\beta}
\leq(1+\kappa)\bar r^{-\beta},
\end{equation}
so the direct-distance bound $\bar r\leq\rho$ holds. Conversely, for small rescaled distance $\epsilon$, coordinate condition $u\leq\epsilon$ and direct-distance bound $\bar r\leq\rho-2\epsilon$ imply image-distance bound $\bar r_1\leq\rho$, and therefore
\begin{equation}\label{e.rescaled_region_inner}
\bar r^{-\beta}+\kappa\bar r_1^{-\beta}
\geq(1+\kappa)\rho^{-\beta}=1.
\end{equation}
The following inner and outer sets formalize the convergence illustrated in Figure~\ref{g.thresholdregions}(b):
\begin{equation}\label{e.region_inclusions}
B_o(\rho-2\epsilon)\cap\{u\leq\epsilon\}
\subseteq D_\epsilon
\subseteq B_o(\rho)\cap\{u\leq\epsilon\}.
\end{equation}
The inner set (on the left) contains the left half-disk of radius $\rho-2\epsilon$. The outer set (on the right) is contained in the left half-disk of radius $\rho$ together with a vertical strip of width $\epsilon$ and height at most $2\rho$. Hence the bounds on the area $|D_\epsilon|$, namely,
\begin{equation}\label{e.area_bounds}
\frac{\pi}{2}(\rho-2\epsilon)^2
\leq |D_\epsilon|
\leq \frac{\pi}{2}\rho^2+2\rho\epsilon.
\end{equation}
As rescaled distance $\epsilon\to0$, both bounds equal $\frac{\pi}{2}\rho^2+O(\epsilon)$, and therefore
\begin{equation}\label{e.area_asymptotic}
|D_\epsilon|=\frac{\pi}{2}(1+\kappa)^{2/\beta}+O(\epsilon).
\end{equation}
Since the mean measure satisfies $\Measure_{+}(t)=\lambda t^{2/\beta}|D_\epsilon|$ and the rescaled wall distance satisfies $\epsilon=a t^{-1/\beta}\to0$, we obtain the unit fading case of the asymptotic formula in expression~\eqref{e.asympt}.}

{\color{revisionred}For general fixed mark values $f,f'>0$, the rescaled threshold condition becomes
\begin{equation}\label{e.general_rescaled_region}
f\bar r^{-\beta}+\kappa f'\bar r_1^{-\beta}\geq1,
\qquad u\leq\epsilon.
\end{equation}
As the rescaled wall distance $\epsilon$ tends to zero, the rescaled direct and image distances $\bar r$ and $\bar r_1$ coincide, and the qualifying region approaches the left half-disk of radius $(f+\kappa f')^{1/\beta}$. Moreover, the distance inequality $\bar r_1\geq\bar r$ confines the region to the disk of that radius, whose area is $\pi(f+\kappa f')^{2/\beta}$. The moment assumptions make this area integrable. Averaging the limiting half-disk areas by dominated convergence proves the asymptotic formula in expression~\eqref{e.asympt}.

Propositions~\ref{p.los} and~\ref{p.nlos} likewise give the respective coefficients at large thresholds $\pi\lambda\E[\fade^{2/\beta}]/2$ and $\pi\lambda\kappa^{2/\beta}\E[(\fade')^{2/\beta}]/2$ for the direct and reflected mean measures. Their sum proves the ratio in expression~\eqref{e.ratio_general_marks}. No independence or common marginal distribution is required between marks $\fade$ and $\fade'$ within a pair.}

\bibliographystyle{IEEEtran}
\bibliography{./walls,IEEEabrv}

% Generated by IEEEtran.bst, version: 1.14 (2015/08/26)
\begin{thebibliography}{10}
\providecommand{\url}[1]{#1}
\csname url@samestyle\endcsname
\providecommand{\newblock}{\relax}
\providecommand{\bibinfo}[2]{#2}
\providecommand{\BIBentrySTDinterwordspacing}{\spaceskip=0pt\relax}
\providecommand{\BIBentryALTinterwordstretchfactor}{4}
\providecommand{\BIBentryALTinterwordspacing}{\spaceskip=\fontdimen2\font plus
\BIBentryALTinterwordstretchfactor\fontdimen3\font minus
  \fontdimen4\font\relax}
\providecommand{\BIBforeignlanguage}[2]{{%
\expandafter\ifx\csname l@#1\endcsname\relax
\typeout{** WARNING: IEEEtran.bst: No hyphenation pattern has been}%
\typeout{** loaded for the language `#1'. Using the pattern for}%
\typeout{** the default language instead.}%
\else
\language=\csname l@#1\endcsname
\fi
#2}}
\providecommand{\BIBdecl}{\relax}
\BIBdecl

\bibitem{keeler2026reflected}
H.~P. Keeler, ``Reflected wireless signals under random spatial sampling,''
  \emph{arXiv preprint arXiv:2601.20699}, 2026.

\bibitem{baccelli1997stochastic}
F.~Baccelli, M.~Klein, M.~Lebourges, and S.~Zuyev, ``Stochastic geometry and
  architecture of communication networks,'' \emph{Telecommunication Systems},
  vol.~7, no.~1, pp. 209--227, 1997.

\bibitem{baccelli2001coverage}
F.~Baccelli and B.~B{\l}aszczyszyn, ``On a coverage process ranging from the
  {Boolean} model to the {Poisson--Voronoi} tessellation with applications to
  wireless communications,'' \emph{Advances in Applied Probability}, vol.~33,
  no.~2, pp. 293--323, 2001.

\bibitem{andrews2011tractable}
J.~G. Andrews, F.~Baccelli, and R.~K. Ganti, ``A tractable approach to coverage
  and rate in cellular networks,'' \emph{IEEE Transactions on Communications},
  vol.~59, no.~11, pp. 3122--3134, 2011.

\bibitem{BB09}
F.~Baccelli and B.~B{\l}aszczyszyn, \emph{Stochastic Geometry and Wireless
  Networks, Volume {I}: Theory}, ser. Foundations and Trends in
  Networking.\hskip 1em plus 0.5em minus 0.4em\relax Now Publishers, 2009,
  vol.~3.

\bibitem{blaszczyszyn2020coverage}
B.~B{\l}aszczyszyn, A.~Brochard, and H.~P. Keeler, ``Coverage probability in
  wireless networks with determinantal scheduling,'' in \emph{Proc. WiOpt},
  2020, pp. 1--8.

\bibitem{kishk2020exploiting}
M.~A. Kishk and M.-S. Alouini, ``Exploiting randomly located blockages for
  large-scale deployment of intelligent surfaces,'' \emph{IEEE Journal on
  Selected Areas in Communications}, vol.~39, no.~4, pp. 1043--1056, 2021.

\bibitem{zhang2021reconfigurable}
C.~Zhang, W.~Yi, Y.~Liu, K.~Yang, and Z.~Ding, ``Reconfigurable intelligent
  surfaces aided multi-cell {NOMA} networks: A stochastic geometry model,''
  \emph{IEEE Transactions on Communications}, vol.~70, no.~2, pp. 951--966,
  2022.

\bibitem{shafique2022stochastic}
T.~Shafique, H.~Tabassum, and E.~Hossain, ``Stochastic geometry analysis of
  {IRS}-assisted downlink cellular networks,'' \emph{IEEE Transactions on
  Communications}, vol.~70, no.~2, pp. 1442--1456, 2022.

\bibitem{pradhan2025beamshaping}
A.~Pradhan, M.~F. Imani, and H.~S. Dhillon, ``A beamshaping framework for
  physically consistent reconfigurable intelligent surfaces,'' in \emph{2025
  IEEE International Conference on Communications (ICC)}, 2025, pp. 6191--6196.

\bibitem{shi2023coverage}
X.~Shi, N.~Deng, N.~Zhao, and D.~Niyato, ``Coverage enhancement in
  millimeter-wave cellular networks via distributed {IRSs},'' \emph{IEEE
  Transactions on Communications}, vol.~71, no.~2, pp. 1153--1167, Feb. 2023.

\bibitem{lin2025ternary}
H.-C. Lin and Q.-Y. Yu, ``Ternary stochastic geometry theory for performance
  analysis of {RIS}-assisted {UDN},'' \emph{IEEE Transactions on
  Communications}, vol.~73, no.~1, pp. 600--613, Jan 2025.

\bibitem{choi2025stochastic}
C.-S. Choi, J.~Kim, and J.~Choi, ``Stochastic geometry analysis of
  {RIS}-assisted cellular networks with reflective intelligent surfaces on
  roads,'' \emph{IEEE Transactions on Communications}, vol.~73, no.~2, pp.
  859--873, Feb 2025.

\bibitem{sun2026stochastic}
G.~Sun, F.~Baccelli, K.~Feng, L.~U. Garcia, and S.~Paris, ``A stochastic
  geometry framework for performance analysis of {RIS}-assisted {OFDM} cellular
  networks,'' \emph{IEEE Transactions on Wireless Communications}, vol.~25, pp.
  5859--5875, 2026.

\bibitem{direnzo2019reflection}
M.~Di~Renzo and J.~Song, ``Reflection probability in wireless networks with
  metasurface-coated environmental objects: An approach based on random spatial
  processes,'' \emph{EURASIP Journal on Wireless Communications and
  Networking}, vol. 2019, no.~1, p.~99, 2019.

\bibitem{narayanan2017coverage}
A.~Narayanan, S.~{T. V.}, and R.~K. Ganti, ``Coverage analysis in millimeter
  wave cellular networks with reflections,'' in \emph{2017 IEEE Global
  Communications Conference (GLOBECOM)}, 2017, pp. 1--6.

\bibitem{book2018stochastic}
B.~B{\l}aszczyszyn, M.~Haenggi, P.~Keeler, and S.~Mukherjee, \emph{Stochastic
  geometry analysis of cellular networks}.\hskip 1em plus 0.5em minus
  0.4em\relax Cambridge University Press, 2018.

\bibitem{keeler2016stronger}
H.~P. Keeler, N.~Ross, A.~Xia, and B.~B{\l}aszczyszyn, ``Stronger wireless
  signals appear more {P}oisson,'' \emph{IEEE Wireless Communications Letters},
  vol.~5, no.~6, pp. 572--575, 2016.

\bibitem{keeler2018wireless}
H.~P. Keeler, N.~Ross, and A.~Xia, ``When do wireless network signals appear
  {P}oisson?'' \emph{Bernoulli}, vol.~24, no.~3, pp. 1973--1994, 2018.

\bibitem{nguyen2010best}
V.~M. Nguyen and F.~Baccelli, ``A stochastic geometry model for the best signal
  quality in a wireless network,'' in \emph{Proc. WiOpt}, May 2010, pp.
  465--471.

\bibitem{kingman1992poisson}
J.~F.~C. Kingman, \emph{Poisson Processes}, 1st~ed.\hskip 1em plus 0.5em minus
  0.4em\relax Oxford University Press, 1993.

\bibitem{lawrence1972catalog}
J.~D. Lawrence, \emph{A Catalog of Special Plane Curves}.\hskip 1em plus 0.5em
  minus 0.4em\relax Dover Publications, 1972.

\bibitem{netequivalence}
B.~B{\l}aszczyszyn and H.~P. Keeler, ``Equivalence and comparison of
  heterogeneous cellular networks,'' in \emph{Proc. of PIMRC/WDN-CN}, 2013.

\end{thebibliography}
\end{document}